\newtheorem{theorem}{Theorem}[section]
\newtheorem{lemma}[theorem]{Lemma}
\newtheorem{corollary}[theorem]{Corollary}
\newtheorem{definition}[theorem]{Definition}
\newtheorem{claim}[theorem]{Claim}
\newtheorem*{startheorem}{Theorem}
\begin{document}

\sloppy

\title{Dynamic Weighted Fairness with Minimal Disruptions}

\author{
Sungjin Im\thanks{Department of Computer Science and Engineering, University of California, Merced. \texttt{sim3@ucmerced.edu}}
\and
Benjamin Moseley\thanks{Tepper School of Business, Carnegie Mellon University. \texttt{moseleyb@andrew.cmu.edu}}
\and
Kamesh Munagala\thanks{Department of Computer Science, Duke University. \texttt{kamesh@cs.duke.edu}}
\and
Kirk Pruhs\thanks{University of Pittsburgh, Computer Science Department. \texttt{kirk@cs.pitt.edu}}
}

\date{}
\maketitle
\thispagestyle{empty}

\begin{abstract}
In this paper, we consider the following dynamic fair allocation problem: Given a sequence of job arrivals and departures, the goal is to maintain an approximately fair allocation of the resource against a target fair allocation policy, while minimizing the total number of {\em disruptions}, which is the number of times the allocation of any job is changed. We consider a rich class of fair allocation policies that significantly generalize those considered in previous work. 

We first consider the models where jobs only arrive, or jobs only depart. We present tight upper and lower bounds for the number of disruptions required to maintain a constant approximate fair allocation every time step. In particular, for the canonical case where jobs have weights and the resource allocation is proportional to the job's weight, we show that maintaining a constant approximate fair allocation requires $\Theta(\log^* n)$ disruptions per job, almost matching the bounds in prior work for the unit weight case. For the more general setting where the allocation policy only decreases the allocation to a job when new jobs arrive, we show that maintaining a constant approximate fair allocation requires $\Theta(\log n)$ disruptions per job. We then consider the model where jobs
can both arrive and depart. We first show strong lower bounds on
the number of disruptions required to maintain constant approximate
fairness for arbitrary instances.
In contrast we then show that there there is an algorithm 
that can maintain constant approximate fairness with 
 $O(1)$ expected disruptions per job if the weights of the jobs are
 independent of the jobs arrival and departure order. 
We finally show how our results can be extended to the setting with multiple resources. 
\end{abstract}

\section{Introduction} 

The formal study of fair resource allocation has advanced rapidly in recent years, motivated by applications to computer systems~\cite{DRF, beyondDRF, hierarchalScheduling, noComplaint, discreteJobs, heterogeneousDRF}. The basic theory of fair resource allocation has its roots in Economics~\cite{Varian,HyllandZ} and in scheduling results in computer science~\cite{Graham,minimumUtility}. However, modern applications such as data center scheduling have motivated considering new desiderata in fair resource allocation. 

In this paper, we consider a dynamic model for resource allocation, a topic that has received significant attention in recent literature~\cite{onlineCakeCutting, dynamicSocialChoice, noAgentLeftBehind, dynamicFairDivision, dynamicFairDivision2, EC18Psomas}.  In this model, which is again motivated by computing systems, each of $n$ jobs (or agents) may potentially arrive or depart from the system, so at every time step $t$ we are presented with a set of alive jobs $N^t \subseteq N$. The set $N^{t} \setminus N^{t-1}$ is the set of jobs that {\em arrive} at time $t$, and the set $N^{t-1} \setminus N^{t}$ is the set that {\em departs} at time $t$. We have a single divisible resource. 

There is some underlying fair share policy $I(j, t)$ which specifies the ideal fair share of the resource for job $j$ at time $t$.  At every time step $t$, an allocation policy/algorithm $A$  must determine $A(j, t)$, its allocation of the resource to a job $j \in N^t$. The policy $A$ must be online in that it can not rely on knowledge of the future. Ideally one would like $A$ to be perfectly fair, that is
it is always the case that $A(j, t) = I(j, t)$. However, a perfectly
fair allocation policy would generally lead to a
disruption, which is a change in the resource allocation of a job, of {\em every} job  when any job arrives or departs
(which is exactly when a job's fair share changes in most natural fair share policies). 
These disruptions can have significant overheads as they involve reassigning resources and changing the job states
~\cite{dynamicFairDivision,dynamicFairDivision2,quincy,borg,migration}. Due to the overhead, limiting the number of disruptions is a key design factor to most systems; for example, see~\cite{borg,schwarzkopf2013omega}. Therefore, we follow the lead of~\cite{noAgentLeftBehind, dynamicFairDivision, dynamicFairDivision2}, and investigate the minimum number of disruptions required to achieve approximate fairness.

\begin{definition}
For $c \ge 1$, an allocation policy $A$ is $c$-approximate if it always guarantees that $$A(j, t) \ge I(j, t)/c.$$
\end{definition}

\subsection{Background and Weighted Fairness}
Previous work~\cite{noAgentLeftBehind,dynamicFairDivision,dynamicFairDivision2} considered the case of {\em uniform fairness}, where $I(j,t) = \frac{1}{|N^t|}$. In particular, the work of~\cite{dynamicFairDivision} considered the question: Suppose $d$ disruptions are allowed per time step, what value of $c$ is achievable? They show that $c = (d+1) \ln\left(\frac{d+1}{d}\right)$. In particular, even when $d = 1$, a constant value of $c$ is achievable. Conceptually, this algorithm splits the allocation of the most allocated job in half when a new job arrives, and allocates the other half to the new job. The work of~\cite{dynamicFairDivision2} extends this to the case where $d < 1$.

\paragraph{Weighted Fairness.}
There are many situations where the appropriate notion of fairness is something other than  a uniform sharing of the resource(s). One natural/common example is weighted/proportional fairness. In this setting each job $j$ has weight $w_j$. 

Weights typically correspond to {\em priorities} that could be based on criteria such as willingness to pay for the resource, importance of the job, and so on. Furthermore, as we discuss below, weights also arise naturally in fair allocation contexts where there are {\em multiple} resources that could be complements or substitutes, and the utility (or rate) of a job is a function of the resources of each type allocated to the job.

In {\em weighted fair share policies}, a job's ideal fair share  is proportional to its weight, that is,
$$I(j, t) = \frac{w_j}{\sum_{k \in N^t} w_k}.$$
Uniform fairness is a special case of weighted fairness, where the weight of every job is 1.

\medskip
The weighted case presents new difficulties that are not encountered in the unweighted case. In the model where jobs only arrive, consider the arrival of a large weight job. This can cause the allocations of all jobs to change if we wish to approximate their fair share. Indeed, we need to relax the assumption that the number of disruptions per time step is small, to conditions that either bound the worst-case or the average number of disruptions {\em per job}.

Therefore, the natural questions we seek to answer are: 

\begin{itemize}
\item
What is the optimal bound on the number of disruptions per job for $O(1)$-approximate allocation policies with weighted fairness? 
\item
And even more generally, what is the optimal  bound on the number of disruptions per job for $O(1)$-approximate allocation policies with more general fair share policies? 
\end{itemize}

\subsection{Our Results} 
In this paper, we answer all the above questions by presenting tight results in increasingly complex models of fairness. Further, unlike the unweighted case, we need to distinguish between the settings where jobs only arrive from that where jobs are allowed to arrive and depart. Our main (and somewhat surprising) result is that in the model where jobs only arrive, it is indeed possible to achieve constant approximation to fairness with nearly constant number of disruptions per job. 
When jobs can both arrive and depart, we show that to achieve constant approximate
fairness an algorithm will have to disrupt a large number of jobs per arrival/departure
for some instances. 
In contrast we show that there there is an algorithm 
that can maintain constant approximate expected fairness with 
 $O(1)$ expected disruptions per job if the weights of the jobs are
 independent of the jobs arrival and departure order. 

\subsubsection{Weighted Fairness with Only Arrivals}
We first consider weighted fairness in the arrival-only model, where $N^{t-1} \subseteq N^t$ for all times $t$. The same results will apply to the symmetric departure-only model where $N^{t} \subseteq N^{t-1}$ for all times $t$.
(Imagine maintaining a fair allocation of some resource among a batch of jobs
as jobs finish and depart.) 
In section \ref{sect:logstar} we show the number of disruptions required to achieve approximate fairness  only increases by a very modest factor relative to uniform fair share. 

\begin{theorem}
\label{thm:arrival}
Consider weighted fair share policies in the arrival only model.
There is an $O(1)$-approximate allocation policy
that  will cause
at most 
 $O\left(\log^* n\right)$ disruptions for each job, where $n$ is the total number of arriving jobs. This result is tight, that is, every $O(1)$-approximate deterministic policy must suffer  $\Omega\left( \log^* n\right)$ disruptions per each job on average for some instance.
\end{theorem}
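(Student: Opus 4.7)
The plan is to prove the upper and lower bounds of Theorem~\ref{thm:arrival} separately.

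For the upper bound, I would design a recursive hierarchical allocation policy. At the base level I partition jobs by weight into geometric classes $C_i = \{ j : w_j \in [2^i, 2^{i+1}) \}$, so that within each class all weights agree up to a factor of two. The resource is split among classes in proportion to each class's aggregate weight, and within each class a (weighted variant of the) unit-weight arrival-only algorithm from prior work maintains an $O(1)$-approximate allocation at $O(1)$ amortized disruptions per job, since within-class weights are uniform up to a constant.

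The delicate piece is cross-class rebalancing: when an arrival shifts one class's aggregate weight, every class's budget must nominally change, which if done naively could cause disruptions for a large fraction of live jobs. I plan to exploit the $O(1)$-approximation slack so that a class's budget is only readjusted once its ideal share has changed by a constant factor, which bounds the number of budget-adjustment events per class. To further reduce the cost of these adjustments I would recursively apply the same algorithm at a higher level, treating each class as a single super-job whose weight equals its current aggregate. Each recursion level shrinks the number of objects being managed from $N$ to roughly $O(\log N)$, so the recursion terminates in $O(\log^* n)$ levels, each contributing $O(1)$ amortized disruptions per job.

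For the lower bound, I would construct an adversarial arrival sequence in $\log^* n$ phases. In phase $k$ a carefully sized batch of jobs arrives whose weights form a tight geometric cluster, and the cluster used in phase $k+1$ is chosen to dominate the sum of all weights introduced through phase $k$ (weights growing roughly as a tower function of the phase index). The arrival of each new phase reduces the ideal share of every earlier job by more than a constant factor, forcing any $O(1)$-approximate deterministic policy to disrupt at least a constant fraction of them in that phase. A counting/averaging argument over the $\log^* n$ phases and the $n$ total jobs then yields $\Omega(\log^* n)$ disruptions per job on average.

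The main obstacle is the amortized analysis of the recursive upper bound: the interplay between within-class disruptions and the disruptions triggered by super-job updates higher in the hierarchy must be accounted for without double-counting. I expect to need a layered potential function (or a credit scheme) that isolates each level's contribution, charges each disruption to a proportional drop in potential at its own level, and telescopes cleanly across all $O(\log^* n)$ levels.
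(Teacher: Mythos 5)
Your upper bound has a concrete failure on the canonical hard instance. Take weights $w_j = 2^{j-1}$, one arrival per step. Then every geometric class $C_i=\{j: w_j\in[2^i,2^{i+1})\}$ is a singleton, so ``treating each class as a super-job'' reproduces the original instance and the recursion makes no progress; the claim that each level shrinks $N$ to $O(\log N)$ is false whenever the weight range is large (the number of classes is $\Theta(\log(w_{\max}/w_{\min}))$, which is unrelated to $N$ and equals $N$ here). Worse, your cross-class rule ``readjust a class's budget once its ideal share has changed by a constant factor'' yields $\Theta(\log(\text{weight ratio}))$ adjustments per class, not $O(1)$: on the instance above the total weight doubles at every arrival, so every live job is readjusted every $O(1)$ steps, giving $\Theta(n^2)$ disruptions --- exactly the naive bound one must beat. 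The paper's $\log^*$ bound does not come from constant-factor tracking plus recursion; it comes from rounding the inverse fair share through a tower-spaced function $g$ (with $g(k+1)\approx 2^{g(k)}$), so that a job disrupted when its fair share is $1/s$ is not touched again until its fair share is roughly $1/2^{s}$, together with a floor of order $1/n$ on allocations and a careful over-allocation argument showing that the value $1/g(k)$ appears at most about $\log_2 g(k+1)$ times, so the inflated allocations still sum to $O(1)$. Your proposal contains no analogue of this tower-spaced rounding, and the ``layered potential function'' you defer to cannot manufacture it, because the per-level cost is genuinely $\Theta(\log)$ of the weight ratio under your rule.

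Your lower bound rests on a false implication: a drop in a job's ideal fair share does \emph{not} force a disruption, since $c$-approximation only lower-bounds the allocation and the algorithm may leave old allocations untouched. Disruptions are forced only through the capacity constraint: the new arrivals need room, so resource must be reclaimed from existing jobs. In your construction, where each phase's batch has weight dominating everything before it (tower growth), the algorithm can free the needed constant fraction of the resource by disrupting only the handful of jobs currently holding the largest allocations, so $O(1)$ disruptions per phase suffice and you get nowhere near $\Omega(n\log^* n)$ in total. The paper's construction instead uses weights $2^{i-1}$ arriving one per step (so shares halve every step), and partitions the job--time incidence matrix into disjoint triangles whose side lengths $\ell(k)=h(k+1)-h(k)=8c\cdot 2^{h(k)}$ grow tower-like; within a triangle, any undisrupted job is pinned to an allocation of at least $\frac{1}{2c\,2^{h(k)}}$, so if fewer than half the $\ell(k)$ jobs in the triangle are disrupted the total allocation exceeds $1$. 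This is how ``many jobs, each holding a small but equal forced allocation'' is engineered so that freeing room requires $\Omega(\ell(k))$ disruptions, and the number of distinct scales that fit in $n$ steps is $\Theta(\log^* n)$. To repair your argument you would need each ``phase'' to consist of many successive doublings with a capacity-based counting argument over a window of length exponential in the previous scale, which is essentially the paper's triangle scheme rather than your single dominating batch per phase.
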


Our allocation policy groups jobs  into groups with exponentially increasing weights, and then treats each group as a single job. It then applies a monotone transform to the weight of each job, and uses this transformed weight instead of the original weight to perform the weighted fair allocation. The  transformation must both (a) be sufficiently invariant to  keep the number of disruptions low; and (b) sufficiently faithful to the original weight of the jobs to achieve $O(1)$-approximation. In fact, it is {\em a priori} not even clear that such a transform even exists, and showing its existence is one of our primary technical contributions.

\paragraph{Cobb-Douglas Utilities and Proportional Fairness.} Our allocation policy and its analysis easily extend to some canonical settings where there are $D$ divisible resources each with unit supply, and the rate of a job is a function of the resources allocated to it. 
One canonical rate model is {\em Cobb-Douglas}~\cite{sharingIncentives}, where job $j$ has a {\em substitutability} vector $\alpha_{jd}, d = [D]$ with $\sum_{d=1}^D \alpha_{jd} = 1$. Given allocation $x_{jd}$ in dimension $d$, the rate of execution is:
$$ y_j = \prod_{d=1} x_{jd}^{\alpha_{jd}}$$
A {\em proportionally fair} allocation~\cite{Im2014,sharingIncentives} maximizes $\prod_j y_j$.  It is easy to check that the resulting allocation has a closed form where:
$$ I_d(j,t) = x_{jd} = \frac{\alpha_{jd}}{\sum_{k \in N^t} \alpha_{kd}} \qquad \forall j, d \in [D]$$

Note now that this allocation independently performs a weighted fair allocation in each dimension $d$, where the weight of job $j$ in dimension $d$ is $\alpha_{jd}$. Further, it is easy to check that if the allocation is $c$-approximate in each dimension, then the resulting rate $y_j$ is also a $c$-approximation. Therefore, if we run our allocation policy independently in each dimension, the resulting policy is a constant approximation to the rate, and the resulting number of disruptions  is $O(nD \log^* n)$, where $n$ is the total number of arriving jobs.

\subsubsection{Weighted Fairness with Both Arrivals and Departures}
We next consider the case where jobs can both arrive and depart. In the uniform setting, the case with both arrivals and departures is not any harder than the arrival-only model. However, when we generalize to weighted fairness, this is no longer the case. In Section \ref{sect:arrivaldeparture} we prove
Theorem \ref{thm:departure}, which shows that with both arrivals and departures it is no longer possible
to always achieve both $O(1)$-approximation and a near linear number of disruptions. 

In contrast in Section \ref{sect:randomarrivaldeparture} we prove
Theorem \ref{thm:randomdeparture} that shows that this is possible if job weights
are independent of the jobs arrival and departure order. 

\begin{theorem}
\label{thm:departure}
Consider weighted fair share policies with both job arrivals and departures.
For every $c$-approximate deterministic algorithm $A$, there is an instance that causes $A$ to make $\Omega(  n^{1+ 1/(4c+1)})$ disruptions. 
\end{theorem}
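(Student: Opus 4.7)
I would construct an adaptive adversary against any given $c$-approximate deterministic algorithm. A starting observation is that naive constructions, such as introducing a single heavy job and removing it repeatedly, fail for $c > 1$: a ``frugal'' algorithm that keeps every resident at the minimum $A(j,t) = I(j,t)/c$ has slack $1 - 1/c$ available, which is enough to absorb any single arriving heavy job with zero resident disruptions. The construction must therefore force residents' \emph{ideal} shares to \emph{increase} via adversarial departures of other jobs, so that residents' allocations cannot remain at the minimum and must actually grow.

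Accordingly, I would design a nested, geometrically-structured schedule with $\Theta(c)$ weight levels and parameter $r = \lfloor n^{1/(4c+1)} \rfloor$, in which heavier-level jobs arrive and depart inside periods when lighter-level jobs are present. The adversary interleaves arrivals and departures so that each oscillation at a given level forces the ideal shares of a large pool of active lower-level jobs to shrink by a factor of $\Theta(r)$ and then grow back, thereby forcing $\Omega(r)$ lower-level allocations to actually move (in each direction) to remain within the $c$-approximation envelope.

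The technical core is a single-level lemma: if $m$ common-weight residents are alive during a block whose operations push their ideal share down by factor $r$ and then restore it, any $c$-approximate algorithm performs $\Omega(m/c)$ resident disruptions during the block. The proof is a capacity-and-constraint argument: the push-down forces $\Omega(1/c)$ units of total allocation off the residents in order to fit the arriving heavy jobs, the push-back forces comparable allocation back once those heavy jobs depart, and each individual resident can absorb at most $O(1/m)$ of the swing without already exceeding one of the two extreme constraints, so $\Omega(m/c)$ residents must have changed. Applying this lemma at each level of the nested construction and summing the per-operation cost yields $\Omega(n^{1+1/(4c+1)})$ disruptions.

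The main obstacle is the recursive bookkeeping: I must rule out that the algorithm amortizes disruptions across levels by parking allocations in an intermediate state that is simultaneously useful at multiple scales. I would handle this with a potential function such as $\Phi(t) = \sum_{j \in N^t} \log\!\bigl(A(j,t)/I(j,t)\bigr)$, which is bounded by $\log c$ per job, changes by only $O(1)$ per disruption, and is forced by the adversarial schedule to move by $\Omega(m)$ per level-oscillation. The specific exponent $4c+1$ ultimately arises from balancing the geometric compounding of forced swings against the $\Theta(c)$-factor of slack that is lost across each level transition in the capacity constraint.
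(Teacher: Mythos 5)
Your blueprint---geometric weight levels, $\Theta(c)$ of them, heavy levels arriving and then departing on top of a pool of lighter residents, parameter $\approx n^{1/(4c+1)}$, and a capacity/counting argument---is essentially the paper's construction (batches of type $k$ with $M/b^k$ jobs of weight $a^k$, $a=2b$, phases that end once half of some batch has been disrupted, whereupon all heavier batches depart). The genuine gap is your single-level lemma, which is false as stated, and your own opening observation already shows why. If $m$ equal-weight residents currently dominate the total weight, a frugal algorithm holds each at $1/(cm)$, leaving $1-1/c$ slack; a block of heavy arrivals has total fair share less than $1$ and hence needs only about $1/c \le 1-1/c$ (for $c\ge 2$), so it fits entirely in the slack, and after the heavies depart the residents sit again exactly at $I/c$---zero resident disruptions for that block, not $\Omega(m/c)$. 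The approximation constraint is only a lower bound and the capacity constraint is global, so a push-down by itself forces nothing off the residents, and there is no per-job upper ``extreme constraint'' for the step ``each resident can absorb at most $O(1/m)$ of the swing'' to push against. The forcing only appears when $\Theta(c)$ heavy levels are pinned \emph{simultaneously}: each level, at the moment it is the heaviest alive level, carries at least half the total weight, hence must hold $\Omega(1/c)$ in aggregate, and it keeps holding $\Omega(1/c)$ until a constant fraction of its jobs are individually disrupted (undisrupted jobs retain their old, high allocations). Only after about $4c$ such pinned levels does capacity run out and force $\Omega(m)$ disruptions at some level; the lemma must therefore be a cross-level capacity statement, not a per-block statement that you ``apply at each level.''

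The accounting is the second gap. The potential $\Phi(t)=\sum_{j}\log(A(j,t)/I(j,t))$ does not have the properties you assert: $A/I$ is bounded below by $1/c$ but not above (a job's fair share can be tiny while its allocation is large), a single disruption can change $\log(A_j/I_j)$ by an arbitrary amount, and $\Phi$ moves on every arrival and departure even when no disruption occurs, since $I$ changes for all alive jobs. So the cross-level amortization worry is not actually discharged. The paper handles it with a direct charging scheme you would still need: when the lowest disrupted level $k$ has had half of its $M/b^k$ jobs' allocations changed, those $\Omega(M/b^k)$ disruptions are charged to the at most $M/((b-1)b^k)$ heavier jobs that the adversary then removes, so each departing job absorbs a charge of $\Omega(b)$ and no job is ever charged twice; the process runs until $\Theta(M)$ jobs have arrived, at least $M$ of which depart. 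Choosing $b=\Theta(n^{1/(4c+1)})$, the largest value for which the top level of the at most $4c+1$ levels still contains one job, then yields $\Omega(n\cdot n^{1/(4c+1)})$ disruptions; your sketch stops short of both this charging step and the termination/counting of arrivals, which is where the exponent is actually cashed out.
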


\begin{theorem}
\label{thm:randomdeparture}
Consider weighted fair share policies with both job arrivals and departures.
Assume that the weights $w_1, \ldots, w_n$ of the jobs are arbitrary,
but the assignment of these weights to the $n$ jobs is uniformly random. 
In this setting there is an  $4$-approximate randomized algorithm $A$,
for which the expected number of disruptions per arrival and per departure is 
at most 5. 
\end{theorem}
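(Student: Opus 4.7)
My plan is a two-level allocation. I partition the jobs into the geometric weight classes $G_k=\{j:2^k\le w_j<2^{k+1}\}$ and let $s_k(t)=|G_k\cap N^t|$ denote the alive count in class $k$. The outer layer assigns a share $\phi_k(t)$ of the resource to each class with $\sum_k\phi_k(t)\le 1$, and the inner layer splits $\phi_k(t)$ uniformly among the $s_k(t)$ alive class-$k$ jobs. A direct check shows that if $\phi_k(t)$ is kept within a factor $2$ of the ideal class share $\phi_k^\star(t):=s_k(t)\,2^k\big/\sum_j s_j(t)\,2^j$, then the resulting per-job allocation $\phi_k(t)/s_k(t)$ is within a factor of $4$ of the weighted fair share $w_j/\sum_i w_i$, which gives the claimed $4$-approximation pointwise (deterministically, over every realization of the algorithm's randomness).

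For the inner layer I would plug in the uniform fair-share algorithm of \cite{dynamicFairDivision}, which absorbs class-$k$ arrivals and departures with $O(1)$ disruptions per event while maintaining the required within-class uniform share. For the outer layer I would hold $\phi_k(t)$ fixed at a dyadic value and only re-balance $\phi_k$ when $\phi_k^\star$ has itself moved by a factor of $2$ away from the current $\phi_k$. A single re-balance of class $k$ rescales the inner-layer allocations for that class, costing $O(s_k(t))$ disruptions; every other event is handled by the inner layer alone and costs $O(1)$.

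The main obstacle is bounding the expected total cost of the outer re-balances, and this is exactly where the random weight assignment is essential. At every fixed time $t$ the multiset of alive weights is, by the hypothesis of the theorem, a uniformly random size-$|N^t|$ sub-multiset of $\{w_1,\dots,w_n\}$, so the counts $s_k(t)$ evolve as coupled hypergeometric random walks and the total alive weight $W^t=\sum_k s_k(t)\,2^k$ drifts smoothly with each event. A martingale/concentration argument then bounds, for each class $k$, the expected number of dyadic crossings of $\phi_k^\star$; summed over classes the doubling structure telescopes, so the total expected cost of outer re-balances is $O(n)$. Adding the inner-layer cost of $O(1)$ per event gives $O(n)$ expected disruptions overall, i.e., $O(1)$ per event; a careful choice of the dyadic bands and of the inner-layer constant then brings the concrete bounds down to the claimed $4$ and $5$. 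Without the random-weight hypothesis this scheme fails outright by Theorem~\ref{thm:departure}, so the hypergeometric concentration of $s_k(t)$ must carry the proof.
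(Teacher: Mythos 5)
Your high-level plan (two-level scheme, geometric weight classes, re-balance a class only when its ideal share drifts by a factor of two) is plausible, but the proof has a genuine gap exactly at the step that carries all the difficulty: bounding the expected cost of the outer re-balances. You assert this via ``a martingale/concentration argument'' over ``coupled hypergeometric random walks,'' with the supporting claim that the total alive weight $W^t$ ``drifts smoothly with each event.'' That claim is false: a single arrival or departure can change $W^t$, and hence every class share $\phi_k^\star$, by an arbitrarily large factor --- this is precisely the engine of the lower bound in Theorem~\ref{thm:departure}. The per-time-step fact that the alive weights form a random sub-multiset gives you hypergeometric \emph{marginals} for $s_k(t)$, but it says nothing about the number of factor-two excursions of $\phi_k^\star$ along the adversarially chosen arrival/departure trajectory, nor does it account for the fact that one heavy arrival can trigger simultaneous re-balances of \emph{all} light classes at a cost of $\Theta(|N^t|)$, so counting crossings per class without weighting each crossing by the alive class size at that moment is not the right quantity. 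What is actually needed is a per-event estimate of the form: conditioned on the current alive multiset, the probability that the arriving (or departing) job is heavy enough relative to the alive total to trigger a re-balance, multiplied by the cost of that re-balance, is $O(1)$. That computation is the entire content of the theorem and it is missing from your write-up. (For comparison, the paper's proof is a one-level scheme: give each job half its fair share, pick a single random threshold $T\in[1/2,1]$ and reset everything when the total weight crosses a point $2^kT$; then, by exchangeability, the arriving job's weight is uniform over the alive multiset, it is the maximum with probability $1/k$ at cost at most $k$, and otherwise the total weight at most doubles and the random threshold is crossed with probability at most $4w_j/W$, giving expected cost $1+\sum_j 4w_j/W\le 5$. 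Note the role of the \emph{random} threshold in making the crossing probability proportional to $w_j/W$; your deterministic dyadic bands would need their own argument here.)

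Two secondary problems: the stated constants do not follow from your construction. Plugging the unweighted algorithm of~\cite{dynamicFairDivision} into the inner layer multiplies approximation factors (factor $2$ from the class width, factor $2$ from the outer slack, times the inner algorithm's constant, which already exceeds $1$), so the result is strictly worse than $4$-approximate, and ``a careful choice of the bands brings the bounds down to $4$ and $5$'' is not substantiated --- tightening the bands or class widths increases the re-balance frequency, so there is a real tension to resolve, not just a bookkeeping step. Until the expected re-balance cost is actually bounded by a conditioning argument of the kind above, the $O(1)$-disruptions-per-event claim is unsupported.
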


\subsubsection{Monotone Fairness}
We next consider the number of disruptions needed to achieve approximate fairness for an arbitrary fair share function $I$ with arrivals only. 
The first thing to observe is that one can simulate departures by setting the fair share of a job to zero.
(Note that the lower bound in Theorem~\ref{thm:departure} extends to the case with both arrivals and departures.)  Thus to obtain some sort of positive result, one needs to impose some additional property on $I$. 
One natural  property that many/most fair share policies have
is monotonicity, that is, the arrival of a job can not increase another job's fair share, and the departure of a job can not decrease another job's fair share. More formally:

\begin{definition}
\label{def:monotone} 
A  fair resource share policy $I$ is monotone if it satisfies the following conditions:
Suppose job $j$ arrives at time $t$, then $I(j',t) \le I(j',t-1)$ for every $j' \in N^{t} \setminus \{j\}$. Similarly, if job $j$ departs at time $t$, then $I(j',t) \ge I(j',t-1)$ for every $j' \in N^{t} \setminus \{j\}$. 
\end{definition}

\begin{figure*}[htbp]
\centering
\begin{tabular}{|l|c|c|r|}
\hline 
Fairness Model & Dimensions & Arrival Model & Disruptions per Job \\
\hline 
Weighted Round Robin & $1$ & Arrival Only & $\Theta(\log^* n)$ \\
Monotone Fairness & $1$ & Arrival Only & $\Theta(\log n)$ \\
Cobb-Douglas + Proportional Fairness & $D$ & Arrival Only & $\Theta(D \log^* n)$ \\
Dominant Resource Fairness & $D$ & Arrival Only & $\Theta(D \log n)$ \\
\hline
Weighted Round Robin & $1$ & Arrival-Departure & $\Omega\left(n^{\frac{1}{4c+1}}\right)$ \\
Arbitrary Fair Policy & $1$ & Arrival Only & $\Omega\left(n^{\frac{1}{4c+1}}\right)$ \\
\hline
\end{tabular}
\caption{\label{tab:1} Summary of worst case number of disruptions per job needed to achieve $c$-approximate fairness for some constant $c > 1$. The lower bound on the penultimate line extends to monotone fairness. The final line should be interpreted as: There exists some fair share policy for which the number of disruptions is lower bounded by $\Omega\left(n^{\frac{1}{4c+1}}\right)$. Note that this table doesn't show Theorem~\ref{thm:randomdeparture} which states $O(1)$ disruptions per job on average when jobs are assigned random weights in the arrival-departure model.}
\end{figure*}

 In Section~\ref{sect:monotone} we show that while more disruptions may be needed to approximate fairness for an arbitrary monotone 
 fairness policy than for weighted fairness policies, it is still possible to achieve
 an almost linear number of disruptions.

\begin{theorem}
\label{thm:monotone}
Consider general monotone share policies in the arrival-only model.
There is a $O(1)$-approximate deterministic algorithm $A$ such that the number of disruptions per job is $O(\log  n)$. This bound is tight, that is, for every deterministic $O(1)$-approximate algorithm $A$, there are instances that cause $A$ to make $\Omega(\log n)$ disruptions per job on average.
\end{theorem}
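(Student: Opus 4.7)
My plan is to prove the upper bound via a dyadic-level tracking algorithm (relatively direct) and the matching lower bound via an adversarial construction (the main challenge). For the upper bound, I would maintain each alive job at one of $O(\log n)$ discrete dyadic levels. Fix $L = \lceil \log_2 n \rceil + 1$, and for each alive job $j$ at time $t$ define
$$\ell(j,t) \;=\; \min\bigl(L,\ \lceil \log_2 (1/I(j,t)) \rceil \bigr), \qquad A(j,t) \;=\; 2^{-\ell(j,t)-1}.$$
When $\ell(j,t) < L$ we have $I(j,t) \in [2^{-\ell},\, 2^{-\ell+1})$, so $A(j,t) \ge I(j,t)/4$; when $\ell(j,t) = L$, the job is ``tiny'' with $I(j,t) \le 1/n$ and $A(j,t) = 2^{-L-1} \ge I(j,t)/4$ still. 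For feasibility, non-tiny contributions sum to at most $\tfrac12 \sum_j I(j,t) \le 1/2$ and tiny contributions (at most $n$ jobs) to at most $n \cdot 2^{-L-1} \le 1/4$, so $\sum_j A(j,t) \le 3/4 < 1$. Since $I(j,t)$ is monotone non-increasing after $j$'s arrival, $\ell(j,t)$ is monotone non-decreasing and integer-valued in $\{0, 1, \ldots, L\}$; each strict increment halves $A(j,t)$ and is exactly one disruption, so each job incurs at most $L = O(\log n)$ disruptions, yielding a $4$-approximate algorithm with the claimed disruption bound.

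For the matching lower bound, I plan to construct a monotone fair share schedule on $n$ arriving jobs that forces $\Omega(n \log n)$ disruptions against any deterministic $c$-approximate algorithm. The schedule would process arrivals over $n$ time steps so that at each arrival (i) the incoming job demands a constant fraction of the resource and (ii) a subset of surviving jobs simultaneously has its ideal share dropped by a constant factor in a non-uniform way not expressible as a uniform rescaling by weights. Maintaining $\sum_j A(j,t) \le 1$ together with $A(j,t) \ge I(j,t)/c$ should force the algorithm to shrink the allocations of many surviving jobs per arrival, and a charging argument over all arrivals should yield $\Omega(n \log n)$ total disruptions, i.e.\ $\Omega(\log n)$ per job on average.

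The main obstacle is separating this lower bound from the $\Theta(\log^* n)$ bound in the weighted case (Theorem~\ref{thm:arrival}); the extra $\log n / \log^* n$ factor must come from the freedom monotone but non-weighted policies have to drop $I(j,t)$s at distinct rates for different jobs within a single arrival---an option unavailable to weighted policies, which scale all existing shares by the same factor per arrival. Formalizing this requires both ruling out ``preallocation'' strategies that exploit the slack $1 - \sum_j I(j,t)/c$ to postpone reorganizations, and designing the schedule so that at each arrival no small subset of surviving jobs has enough absorbable slack to shield the rest from disruption; I expect a potential/charging argument tracking each job's dyadic allocation level over its lifetime will be the right vehicle for this aggregation.
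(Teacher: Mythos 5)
Your upper bound is correct and complete, and it is a mildly different (arguably cleaner) route than the paper's: you round the ideal share to absolute dyadic levels $2^{-\ell}$ capped at $\approx 1/n$, getting $4$-approximation, feasibility via $\sum_j I(j,t)\le 1$ plus the $n\cdot 2^{-L-1}$ tail, and $O(\log n)$ disruptions from monotonicity of the level. The paper instead keeps each job's allocation within a $(1+\epsilon)$ factor of its own current fair share, resetting only when the share drops below a $(1+\epsilon/2)$ factor of the allocation and freezing ``light'' jobs with share below $\epsilon/(2n)$; that yields $(1+\epsilon)$-approximation rather than a fixed constant, at the cost of a slightly more delicate feasibility accounting. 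Either version suffices for the theorem as stated (both also implicitly assume $n$ is known, handled by guess-and-double).

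The lower bound, however, is a genuine gap: what you have is a statement of intent (``design the schedule so that \dots a charging argument should yield $\Omega(n\log n)$''), and the concerns you flag---preallocation and slack-absorbing subsets---are exactly the obstacles your sketch does not resolve. The missing idea is that the adversary should be \emph{adaptive} and should tie each job's ideal share directly to the algorithm's own current allocation (for a deterministic algorithm this still yields a concrete instance). Concretely, the paper maintains the invariant that every alive job's fair share equals $1/c$ times the algorithm's current allocation to it: each new job arrives with fair share $1/c$ (feasible, since existing shares sum to at most $1/c$), so the $c$-approximate algorithm must give it at least $1/c^2$; whenever the algorithm changes a job's allocation, the new allocation can be smaller than the old one by a factor of at most $c^{2}$ (else it would violate $c$-approximation against the share $=\,$old allocation$/c$), and the adversary then lowers the shares of the new and disrupted jobs back to allocation$/c$. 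At the end, feasibility of the shares forces $\Omega(n)$ jobs down to fair share at most $c/n$, hence allocation at most $c^{2}/n$, and since each began at allocation at least $1/c^{2}$ and loses at most a $c^{2}$ factor per disruption, each such job suffers $\Omega(\log_c n)$ disruptions, giving $\Omega(n\log n)$ in total. This adaptive coupling is precisely what neutralizes preallocation (any hoarded allocation just pins the share at allocation$/c$, and the job must still descend through $\Omega(\log n)$ multiplicative levels) and what exploits the per-job, non-uniform share drops unavailable to weighted policies; without it, a fixed oblivious ``schedule'' of the kind you describe will not force the claimed bound, since the algorithm could pre-position allocations against a known share trajectory.
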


\paragraph{Concave Utilities.}  We now give some examples of monotone fair share policies in a setting where the rate at which $j$ executes is a function $y_j = f_j(x_j)$ of the allocated resource amount $x_j$, where $f_j$ is non-decreasing and concave. This models the canonical cluster computing scenario where there are many identical machines, and parallelizable jobs~\cite{EdmondsP09,Im2014}. The rate of execution is a concave function of the amount of machines assigned to it.  Suppose the fair allocation algorithm either maximizes  $\min_{j \in N^t} y_j$, {\em i.e.}, is max-min fair, or maximizes the product of the rates, $\prod_{j \in N^t} y_j$, {\em i.e.}, is {\em proportionally fair}. Then it is easy to check that both these optima are achieved by water-filling on the $x_j$. Therefore, the resulting allocations $I(j,t) = x_j$ are monotone.

\paragraph{Dominant Resource Fairness.} In the case of multiple resources, we say that an allocation is monotone if it is monotone for each resource individually. One popular fair share policy is {\em weighted Dominant Resource Fairness} (weighted DRF)~\cite{DRF, beyondDRF} that generalizes a max-min fair allocation. Suppose job $j$ has weight $w_j$ and resource requirement $r_{jd}$ in resource $d \in [D]$. Assume by scaling that there is one unit of resource available for each resource. If the job executes at rate $y_j$, it consumes an amount $r_{jd} y_j$ of resource $d$. The weighted DRF allocation sets $y_j$ so that:
\begin{enumerate}
\item $\sum_j r_{jd} y_j \le 1$ for all dimensions $d$;
\item $w_j y_j \max_d r_{jd}$ is the same for all jobs, {\em i.e.}, the weighted share of the dominant resource consumed is equalized.
\end{enumerate}
It is clear that these shares can be computed by water-filling on the $y_j$, so that the fair share $I_d(j,t) = r_{jd} y_j$ is monotone in each dimension $d$. 

Thus with $D$ resources, our results immediately imply a bound of $O(nD \log n)$ disruptions  needed to maintain a constant approximation to any monotone fair share policy, including the DRF policy in particular.

\subsection{Summary and Related Work}
We summarize our results in Table~\ref{tab:1}.  We have already discussed the work of~\cite{noAgentLeftBehind,dynamicFairDivision,dynamicFairDivision2}, which considers the unweighted case. 
The work of~\cite{Li:2018,Li:2018ArXiv} study a demand model that is superficially similar to weighted fairness.
In the demand model in \cite{Li:2018,Li:2018ArXiv} each job $j$ has a demand
$d_j$, representing the fraction of the resource that the job wants. 
An allocation is then $c$-fair if the fraction of the resource
that a job $j$ gets is at least $\min(d_j, d_j/(c \cdot d))$, where $d$ is
the total demand of the jobs in the system. They
show that $\Theta(\log n)$ disruptions are necessary and sufficient to 
maintain constant approximate fairness in this demand model.  However, their definition of jobs ``present" in the system at any point in time also includes jobs that departed in the past. In that sense, their model even with departures is comparable to our arrival only model, where our amortized bound of $\Theta(\log^* n)$ disruptions is an improved result. Otherwise, this demand model is not directly comparable to our work.

\section{Weighted Fairness: Arrival Model and Proof of Theorem~\ref{thm:arrival}}
\label{sect:logstar}

In this section we prove Theorem \ref{thm:arrival}, which we restate below for
convenience.

\begin{startheorem}
Consider weighted fair share policies in the arrival only model.
There is an $O(1)$-approximate allocation policy
that  will cause
at most 
 $O\left(\log^* n\right)$ disruptions for each job. This result is tight, that is, every $O(1)$-approximate deterministic policy must suffer  $\Omega\left(n \log^* n\right)$ total disruptions for some instance.
\end{startheorem}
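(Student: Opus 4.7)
The plan is to reduce the weighted problem to a recursive stack of (nearly) unweighted sub-problems, and to match the resulting upper bound with a recursively constructed adversary.

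For the upper bound, I would first bucket arriving jobs into geometric weight classes $G_k = \{j : 2^k \le w_j < 2^{k+1}\}$, so within each class the weights agree up to a factor of two. The in-class allocation then essentially reduces to the unweighted halving algorithm of \cite{dynamicFairDivision}, giving $O(1)$ disruptions per in-class arrival. The heart of the argument is maintaining a ``bulk share'' $s_k$ for each class that is a constant-factor approximation to $|G_k|\cdot 2^k / \sum_i |G_i|\cdot 2^i$ while updating the $s_k$'s infrequently. I would do this by replacing the true class weight with a monotone, coarsened \emph{transformed weight} that advances only when the true weight crosses a geometric threshold; choosing thresholds carefully yields a constant-factor distortion while limiting cross-class rebalances. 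A single application of this idea already gives $O(\log n)$ disruptions per job; to sharpen this to $O(\log^* n)$, I would apply the same transform-and-coarsen trick recursively to the bulk-allocation sub-problem, shrinking the number of ``live'' aggregated classes by a logarithmic factor per level. The recursion bottoms out after $O(\log^* n)$ levels, and choosing the per-level slack sufficiently small keeps the composed approximation constant.

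The principal obstacle is designing the coarsening transform so that it is simultaneously faithful enough to give a constant-factor approximation and invariant enough to admit the $\log^*$ recursion; the excerpt itself explicitly flags the existence of such a transform as the main technical contribution. I would package the transform as a lazy rounding rule, prove a structural invariant bounding how many thresholds any class can cross during its lifetime, and then charge disruptions to threshold crossings via a standard amortized potential argument that threads through all $O(\log^* n)$ recursion levels.

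For the lower bound, I would construct an adversarial instance that mirrors the algorithmic recursion. At the base level, $n$ unit-weight jobs arriving in a worst-case pattern already force $\Omega(n)$ disruptions for any $O(1)$-approximate algorithm, via a direct argument that any sufficiently even allocation must be repeatedly halved. At each subsequent level $\ell$, the adversary injects super-heavy jobs whose weights dominate all lower-level jobs by a large enough factor; any $O(1)$-approximate algorithm must then carve out a constant fraction of the resource for the new jobs, forcing another $\Omega(n)$ disruptions among the older jobs. Because the new weights dominate the old, the level-$\ell$ analysis decouples from the earlier levels and the bounds compose cleanly. Iterating over the $\Theta(\log^* n)$ scale separations that can be fit before the instance exhausts its $n$ jobs accumulates to $\Omega(n \log^* n)$ total disruptions, matching the upper bound on average.
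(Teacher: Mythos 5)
Your lower bound argument has a concrete flaw. Injecting super-heavy jobs that dominate all earlier jobs does \emph{not} force $\Omega(n)$ disruptions among the older jobs: a $c$-approximate algorithm is only constrained from below, so after one wave it can compress every old job to (current fair share)$/c$ at a cost of one disruption each, after which the old jobs occupy at most $1/c$ of the resource and every subsequent heavy arrival (needing at most roughly $1/c$) can be accommodated without ever touching them again. So each ``scale separation'' in your construction extracts $O(1)$, not $\Omega(n)$, additional disruptions. What actually forces a repeated wave of disruptions is having many \emph{recently arrived} jobs whose mandated allocations are individually large: a job $j$ last disrupted when the total weight was $2^{t'}$ is entitled to about $2^{j-1}/(c\,2^{t'})$, so to make a set of stale allocations exceed capacity you need on the order of $c\cdot 2^{h}$ new job arrivals within a weight window of ratio $2^{h}$. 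This is exactly what the paper formalizes with its single geometric instance $w_i = 2^{i-1}$ and the disjoint-triangle charging: groups $G_k$ of triangles with side $h(k+1)-h(k) = 8c\cdot 2^{h(k)}$, where if fewer than half the jobs in a triangle are disrupted their stale allocations already exceed $1$, and $h$ grows like a tower so there are $\Theta(\log^* n)$ groups. Your claim that only $\Theta(\log^* n)$ scale separations ``fit'' in $n$ jobs is also backwards---weights are unbounded, so arbitrarily many scale separations fit; the $\log^*$ cap comes from the tower growth in the \emph{number of jobs} each forcing wave consumes. Indeed, if your per-level claim were correct you could iterate more levels and beat the paper's matching upper bound, which is a sign the mechanism is misidentified.

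The upper bound sketch captures the right spirit (a monotone, coarsened transform of the weights), but its crux is unproved and differs structurally from what the paper does. First, bucketing by individual weight $2^k \le w_j < 2^{k+1}$ is not the paper's grouping; the paper groups jobs so that the \emph{cumulative} group weights are powers of two and at least double, which reduces the general case to a single ``super-geometric'' sequence with no cross-class bookkeeping. Second, the step from $O(\log n)$ to $O(\log^* n)$ ``by recursing on the bulk problem'' is exactly the part that needs an argument: a disruption of an aggregated class at a higher recursion level disrupts every job inside it, so you must show each level costs $O(1)$ disruptions per job while the per-level approximation losses and, more delicately, the per-level capacity slack compose over $\log^* n$ levels---none of which is established. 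The paper avoids recursion entirely: it uses one explicit transform $g$ with $\log_2 g(k+1) = g(k)/2^{k}$ (so $g^{-1} = \Theta(\log^*)$), and feasibility rests on the calibrated fact that at most about $\log_2 g(k+1)$ jobs can simultaneously hold allocation $1/g(k)$, giving the convergent sum $\sum_k \left(1/g(k)\right)\cdot\left(g(k)/2^k\right) \le 2$. Your proposal contains no analogue of this quantitative balance, which is the heart of why constant approximation and $\log^*$ disruptions coexist.
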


\subsection{Upper Bound}

 In this subsection we give an algorithm that is $O(1)$-approximate and  ensures that the number of disruptions per job is $O( \log^* n)$. 
To build intuition, while postponing some messy details, we will first discuss some special cases. In particular, our presentation of the special case discussed in section \ref{sec:geo} is designed to explain the key algorithmic design and analysis insights  as simply  as possible. 

Throughout the paper, we assume that the number of jobs $n$ is known a priori. This assumption can be removed by the standard guess-and-double technique, where we use a guess of the number of jobs, say $1$ initially, and keep doubling the guess when the number of jobs exceed the previous value. It is easy to check that our analysis and bounds in the entire paper will hold with this modification to the algorithm, and we omit the details.

\subsubsection{Geometrically Increasing Weights}
\label{sec:geo}
First consider the  special case where the weight of job $j$ that arrives at time $j$ is $2^{j-1}$, which will eventually be our lower bound instance. Intuitively, the worst case instance should be a sequence of jobs whose respective weight keeps increasing considerably but not too drastically: If the increase is tiny, there's no need to disrupt the existing jobs as their fair shares change little when a new job arrives. Further, if the increase is huge, the existing jobs' total  fair share becomes negligible in the near future as opposed to the newly arriving jobs, meaning that disrupting jobs that arrived long ago doesn't help serve new jobs of huge weights.

We give an algorithm that is $O(1)$-approximate and  ensures that the total number of disruptions  is $O(n \log^* n)$.
Note that the total weight of the alive jobs at time $t$ is essentially $2^{t}$, and the ideal fair
share for job $j$ is essentially $I(j, t)= 1/2^{t-j+1}$. Note that the arrival of a new job decreases the fair share of existing jobs by a factor of $2$. A naive approach that maintains a constant factor approximation to these rates will attempt to always maintain a constant factor approximation to every rate, and would therefore reassign the rates of all existing jobs every constant number of steps. This means it will incur $\Omega(n^2)$ reassignments over $n$ jobs. 

\medskip
\noindent {\bf Intuition.} It is {\em a priori} not even obvious we can do any better. The key idea is now to construct a monotone map from $1/I(j,t)$ to a small set of integers, and use the inverse of this map as the rate. Since the set of integers is small and since rates are monotone in the arrival only model, the size of this set will bound the number of times we reallocate the rate. Of course, in this process, first, the new rate needs to be at least a constant factor of the original rate (they could be much larger, but cannot be much smaller); and  secondly, the resource should not be over-allocated given we are increasing rates.

Consider the function $g(x) = 2^{2^x}$. Then, for constant $d \ge 1$ consider modifying the rate $I(j,t) = 1/2^{t-j+1}$ to 
$$A(j,t) = \frac{1}{d} \frac{1}{g(\lfloor \log_2 \log_2 (1/I(j,t)) \rfloor)} = \frac{1}{d} \frac{1}{g(\lfloor \log_2 (t-j+1) \rfloor)}$$

First note that the value  $\lfloor \log_2 (t-j+1) \rfloor$ is an integer that is at most $O(\log n)$ since $t \le n$, so this bounds the number of reassignments per job. Second, without the floor, the expression above is exactly $\frac{I(j,t)}{d}$, and taking the floor only reduces the value of the denominator, so that $A(j,t) \ge \frac{1}{d} I(j,t)$. 

The tricky part is to bound the resource allocated. For this, the new rate cannot be too large for many jobs. In particular, we need to show that for all $1 \le t \le n$, we have
$$ \frac{1}{d} \sum_{j \le t} \frac{1}{g(\lfloor \log_2 (t-j+1) \rfloor)} \le 1$$
Expanding the above summation, we need to bound
\begin{eqnarray*}
 && \frac{1}{d} \left(\frac{1}{2^{2^0}} + \frac{2^2 - 2^1}{2^{2^1}} +  \frac{2^3 - 2^2}{2^{2^2}} + \frac{2^4 - 2^3}{2^{2^3}} + \cdots \right) \\
 & \le &  \frac{2}{d} \sum_{k \ge 0} \frac{2^k}{2^{2^k}} \le \frac{4}{d}
\end{eqnarray*}
Therefore, we have a $O(1)$ approximation to fairness while performing $O(\log_2 n)$ disruptions per job.

\medskip
\noindent {\bf Algorithm.} Our algorithm builds on the above intuition, and constructs a mapping with even smaller range of integers. The tradeoff is that this can lead to over-allocation of resource if we are not careful. Further, our algorithm has to work for any weights and not just those that are growing exponentially. The question is: How far can we push this idea?  Turns out, quite a lot!

Our final algorithm for this special case maintains the allocation 
$$A(j, t) = \frac{1}{d} \max \left(  \frac{1}{g( \lfloor g^{-1}( 1/I(j, t)) \rfloor)} , \frac{1}{n} \right)$$
Here $d$ is a constant, and $g^{-1}(x)$ is a  slowly growing function, whose final definition will be revealed by the analysis. 

As before, the analysis involves showing the following three facts. Note that this is just a sketch of analysis as we already gave a simpler and looser analysis and will give the analysis of our algorithm for the general case momentarily. 
\begin{itemize}
    \item This allocation is $\frac{1}{d}$-approximate. If the floor in the definition of the allocation was removed, then the allocation of a job would be the maximum of $\frac{1}{d}$ of the job's fair share and $\frac{1}{dn}$, which is obviously $\frac{1}{d}$-approximate.
    And the inclusion of the floor can not decrease the allocation.
    \item
    The resource is not over allocated.
    To show this it is sufficient to show $\sum_{j=1}^t \frac{1}{g( \lfloor g^{-1}( 1/I(j, t)) \rfloor)} \le d-1$. This is not completely straight forward, but one reasonable approach would be to bound the number of times that a term $1/g(k)$ can appear in this sum. To get some reasonable bound, $g^{-1}$ can not be too slowly growing.  After a bit of contemplation, one can see that it is sufficient to define $g(x)$ by: $\log_2 g(k+1) = g(k) / 2^k$. 
    Then, as in  the analysis above, the term  $1/g(k)$ can appear only $\log_2 g(k+1)$ times in the sum.\footnote{To see this, consider the terms in the summation in decreasing order of $j$. Note that the value of $I(j, t)$ decreases by a factor of 2 in this order.     Consider the two earliest terms of value $1 / g(k)$ and $1/ g(k+1)$. Then, if $n'$ is the number of appearances of $1 / g(k)$, we have $1 / (g(k) 2^{n'}) = 1 / g(k+1)$, which gives $n' \leq \log_2 g(k+1)$.}  Thus the summation is then bounded by $\sum_{k \geq 1} (1/g(k)) \cdot (g(k) / 2^k) \leq 2$.
    Thus it is sufficient to define $d=3$.
    \item
     No job is disrupted more than $\log^* n$ times.
This follows from noting three facts.
First,  that $g^{-1}(x) = \Theta(\log^* x)$.
Second,  if a job's allocation changes when its fair share is $1/s$, then its allocation will not change again until its fair share is something like $1/2^s$.
Third, the minimum allocation for  each job is $\frac{1}{dn}$. Then, if a job $j$'s initial fair share is $1 / s$, the number of times $j$'s allocation changes is maximized when it does at each time $j$'s fair share becomes $1 / 2^s$, $1 / 2^{2^s}, \dots$ until it becomes smaller than $\frac{1}{dn}$, which immediately gives the desired bound.
\end{itemize}

\subsubsection{Super-Geometrically Increasing Weights}
As the next special case, we will assume that job weights at least double over time -- the only change we will make to handle the general case will be grouping jobs so that groups have exponentially increasing weights. For notational convenience, we assume that exactly one job arrives at each integer time starting from time 0 and index jobs by their arriving time. Our simplified instance is formally defined as follows. Job $t$ arrives at integer time $t \geq 0$ with the following weight $w_t$: $w_0 = 1$ and $W_t := \sum_{t' = 0}^t w_{t'}$ is a power of 2 and is strictly increasing in $t$; thus, we have $W_t \geq 2^t$. 
Note that $i$'s fair share at time $t$ is $\frac{w_i}{W_t}$.
We now show an algorithm that approximately simulates jobs' fair shares. For more intuitive understanding, we advise the reader to read the following pretending that $w_0 = 1$, and $w_t = 2^{t-1}$ for all $t \geq 1$.

\smallskip
\noindent
{\bf Algorithm Description:}  Recursively define a function $g$ defined over positive integers as follows: $g(1) = 1$, $g(2)= 2$, $g(3) = 2^2$, $g(4) = 2^{2^2}$,  $g(k) = 2^{g(k-1)} / 2^{k-1}$ for all integers $k \geq 5$. We extend $g$'s domain to any real number no smaller than $1$ by interpolating the $g$'s values over integer points by arbitrary increasing functions.

In our algorithm a job $i$'s allocation at time $t$ is
$$A(i, t) := \frac{1}{12}\left(g \left( \left \lfloor  g^{-1} \left ( \frac{W_t}{w_i}\right )  \right \rfloor \right)\right)^{-1}$$ 
if $\frac{w_i}{W_t} \ge \frac{1}{12 \cdot 2^i}$; otherwise $A(i, t) = \frac{1}{12 \cdot 2^i}$.

\smallskip
First, we show that each job receives a rate that is (1/12)-approximate.

\begin{lemma} \label{claim:approxWRR}
The algorithm is $\frac{1}{12}$-approximate.
\end{lemma}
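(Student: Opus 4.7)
The plan is to split into two cases according to which branch of the piecewise definition of $A(i,t)$ is active, and verify $A(i,t) \geq \tfrac{1}{12} I(i,t) = \tfrac{w_i}{12 W_t}$ in each. Before doing so, I would note the preliminary fact that since job $i$ has already arrived by time $t$, $w_i$ is a summand of $W_t$, so $W_t/w_i \geq 1$ and the expression $g^{-1}(W_t/w_i)$ is well-defined given that $g$ has been extended to an increasing function on $[1,\infty)$.

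For the first case, suppose $w_i/W_t \geq 1/(12 \cdot 2^i)$, so that $A(i,t) = \tfrac{1}{12}\,\bigl(g(\lfloor g^{-1}(W_t/w_i)\rfloor)\bigr)^{-1}$. The key observation is that taking a floor only decreases the argument and $g$ is increasing, so
\[
g\!\left(\left\lfloor g^{-1}\!\left(\tfrac{W_t}{w_i}\right)\right\rfloor\right) \;\leq\; g\!\left(g^{-1}\!\left(\tfrac{W_t}{w_i}\right)\right) \;=\; \tfrac{W_t}{w_i}.
\]
Taking reciprocals and dividing by $12$ yields $A(i,t) \geq \tfrac{w_i}{12 W_t} = \tfrac{1}{12} I(i,t)$, which is exactly the desired inequality.

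For the second case, suppose $w_i/W_t < 1/(12 \cdot 2^i)$, so $A(i,t) = 1/(12 \cdot 2^i)$. Then directly
\[
\tfrac{1}{12} I(i,t) \;=\; \tfrac{w_i}{12 W_t} \;<\; \tfrac{1}{12 \cdot 2^i} \;=\; A(i,t),
\]
which is even stronger than required. So this case is immediate from the definition of the threshold.

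There is no real technical obstacle here; the statement is engineered so that the ``small fair share'' branch trivially dominates $\tfrac{1}{12} I(i,t)$, while the ``large fair share'' branch reduces to the two monotonicity facts that $\lfloor x\rfloor \leq x$ and $g$ is increasing. The heavy lifting of the construction, namely choosing $g$ so that the recursion $\log_2 g(k) = g(k-1)/2^{k-1}$ keeps the total allocation bounded and so that $g^{-1}$ grows like $\log^* n$, will be used in the subsequent lemmas on non-overallocation and the per-job disruption bound rather than here.
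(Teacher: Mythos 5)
Your proof is correct and follows essentially the same route as the paper's: the key point in both is that the floor can only decrease the argument of the increasing function $g$, so $g(\lfloor g^{-1}(W_t/w_i)\rfloor) \le W_t/w_i$, giving $A(i,t) \ge \tfrac{1}{12}\cdot\tfrac{w_i}{W_t}$, while the small-fair-share branch is immediate from the threshold condition. Your writeup simply spells out the two cases that the paper compresses into one line.
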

\begin{proof}
The claim immediately follows from the fact that $g$ (or equivalently $g^{-1}$) is non-decreasing. Thus, we have $A(i, t) \geq \frac{1}{12} \cdot \frac{w_i}{W_t}$, as desired. 
\end{proof}

The next goal is to show that each job is only disrupted $O(\log^*n)$ times. This observation easily follows if $w_t = 2^{t-1}$ and $g^{-1}$ were $\log^*$ since the value of $\lfloor  g^{-1} ( W_t / w_i )   \rfloor$ would change only very occasionally and $\log^* W_T = \Theta(\log^* n)$; here, $T$ is the last time when a job arrives. But when job weights increase much faster over time, we need more careful analysis. We also need to establish the asymptotic equivalence between $\log^*$ and $g^{-1}$.

\begin{lemma} \label{claim:reallocWRR}
 Each job is disrupted at most $O(g^{-1} (2^{2^{n}}))$ times.
\end{lemma}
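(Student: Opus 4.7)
The plan is to analyze, for each fixed job $i$, how often its allocation $A(i,t)$ can change as $t$ increases. The key structural observation is that in the arrival-only model the cumulative weight $W_t$ is non-decreasing in $t$, so the ratio $W_t/w_i$ is non-decreasing, and by monotonicity of $g$ (hence of $g^{-1}$), the integer quantity $k_t := \lfloor g^{-1}(W_t/w_i) \rfloor$ is also non-decreasing.

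I would first partition the life of job $i$ into two phases. In the \emph{active} phase we have $w_i/W_t \ge 1/(12 \cdot 2^i)$, so $A(i,t) = \tfrac{1}{12}\, g(k_t)^{-1}$, which depends on $t$ only through $k_t$. In the \emph{capped} phase we have $w_i/W_t < 1/(12 \cdot 2^i)$, so $A(i,t) = 1/(12 \cdot 2^i)$ is a constant. Since $W_t/w_i$ is monotone non-decreasing, once job $i$ enters the capped phase it stays there forever, so the transition from active to capped contributes at most one disruption.

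Next, I would bound the number of disruptions within the active phase. Since $A(i,t)$ in that phase depends only on $k_t$, a disruption happens exactly at those times $t$ when $k_t$ takes a new (larger) integer value. The number of such disruptions is therefore at most the maximum value of $k_t$ attained while the active phase persists. But during the active phase we have $W_t/w_i \le 12 \cdot 2^i \le 12 \cdot 2^n$, and thus $k_t \le \lfloor g^{-1}(12 \cdot 2^n) \rfloor$. Adding the one transition disruption, the total number of disruptions per job is at most $\lfloor g^{-1}(12\cdot 2^n) \rfloor + 1 = O(g^{-1}(2^{2^n}))$, as claimed.

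The argument is largely structural, so the main issue is just care in bookkeeping: confirming that within a fixed value of $k_t$ the allocation formula returns exactly one value (so no disruption occurs), and that the monotonicity of $W_t/w_i$ really prevents oscillation between the two phases. The bound $O(g^{-1}(2^{2^n}))$ in the statement is deliberately loose — the argument above actually yields $O(g^{-1}(2^{n+4}))$ — but since $g$ grows at least as fast as a tower function, both expressions are $O(\log^* n)$, which is what matters for the final theorem.
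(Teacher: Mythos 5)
Your proof is correct, but it takes a genuinely different and more direct route than the paper. The paper first partitions jobs into groups $I_1, I_2, \ldots$ in which the cumulative weight at most squares (up to a factor $12$) between consecutive arrivals, argues that a job $j\in I_i$ is frozen once the first job of $I_{i+1}$ arrives (because its fair share has then dropped below the floor $\frac{1}{12\cdot 2^j}$), and then bounds the disruptions inside a group by $g^{-1}(W_{t_2}/w_j)-g^{-1}(W_{t_1}/w_j)+1 = O\bigl(g^{-1}(W_{t_2}/W_{t_1})\bigr)$, invoking the (informally justified) subadditivity-type property $g^{-1}(uv)=O\bigl(g^{-1}(u)+g^{-1}(v)\bigr)$ and the relation $W_{t_2}/W_{t_1}=2^{2^{O(n')}}$ between the group size $n'$ and its weight growth. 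You bypass the grouping entirely by observing that the non-capped formula is only ever applied while $W_t/w_i \le 12\cdot 2^i \le 12\cdot 2^n$, so the monotone integer $k_t=\lfloor g^{-1}(W_t/w_i)\rfloor$ takes at most $g^{-1}(12\cdot 2^n)$ distinct values, plus one disruption for the irreversible switch to the capped allocation $\frac{1}{12\cdot 2^i}$. This is shorter, avoids the subadditivity step, and in fact yields the sharper per-job bound $O\bigl(g^{-1}(2^{n+4})\bigr)$, which implies the stated $O\bigl(g^{-1}(2^{2^n})\bigr)$ by monotonicity of $g^{-1}$ and gives the same $O(\log^* n)$ corollary; what the paper's accounting buys in exchange is a bound localized to the group size $n'$, i.e.\ $O\bigl(g^{-1}(2^{2^{n'}})\bigr)$, though it too is relaxed to $n$ in the end.
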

\begin{proof}
We group jobs so that all jobs in the same group do not change their weight drastically. Precisely, two jobs arriving at times $t$ and $t+1$ are placed into the same group if 
$W_{t+1} \leq 12 W_t^2$. Let $I_1$, $I_2, \ldots$ be the resulting groups -- jobs in $I_i$ arrive before jobs in $I_{i+1}$. 

For the sake of analysis, fix $i$. We first show that the allocation of every job $j$ in $I_i$ remains unchanged after a job in the next group $I_{i+1}$ arrives. Indeed, at time $t$ when the first job in $I_{i+1}$ arrives, $j$'s fair share, $\frac{w_j}{W_t} \leq \frac{W_j}{W_t} < \frac{W_j}{12 W_j^2} = \frac{1}{12 W_j} \leq \frac{1}{12 \cdot 2^j}$; thus, by the definition of the algorithm, $A(j, t) = \frac{1}{12 \cdot 2^j}$ and $j$'s allocation remains unchanged throughout as $j$'s fair share can only decrease as time progresses. 

  Therefore, we now know that a job $j$ in group $I_i$ can change its allocation only until the last job in $I_i$ arrives. Let $n'$ be the number of jobs in $I_i$. Our goal is to upper bound the number of disruptions of $j$'s allocation by $O(g^{-1}(n'))$. Say the first job in $I_i$ arrives at time $t_1$ and the last job in $I_i$ arrives at time $t_2$. Consider any fixed job $j$ in $I_i$. 
  Observe that $A(j, t') \neq A(j, t)$ only if $\left \lfloor g^{-1} \left ( W_{t'} / w_j \right )  \right \rfloor \neq \left \lfloor  g^{-1} \left ( W_{t} / w_j \right )  \right \rfloor$. Since $g^{-1}$ is increasing,
  it follows that $j$ gets disrupted at most $g^{-1}(W_{t_2}/ w_j) - g^{-1}(W_{t_1} / w_j) + 1$ times. This number is again upper bounded by 
  $O(g^{-1} (W_{t_2} / W_{t_1}))$. This is because $g^{-1}(u v) \leq O(g^{-1}(u) + g^{-1}(v))$, which can be easily seen as $g^{-1}$ is much more slowly growing than $\log$ asymptotically. Thus, we know $j$'s allocation changes at most $O(g^{-1} (W_{t_2} / W_{t_1}))$ times. 
  
  To complete the proof, we only need to upper bound $W_{t_2} / W_{t_1}$   in terms of $n'$. Recall that for any two jobs in $I_i$ arriving at adjacent times $t$ and $t+1$, we have $W_{t+1} \leq 12 W_{t}^2$. Thus, we have $n' = \Omega(\log \log (W_{t_2} / W_{t_1}))$, meaning $W_{t_2} / W_{t_1} = 2^{2^{O(n')}}$. Together with the above observation, we conclude each job $j$ in $I_i$ is disrupted at most $O(g^{-1} (2^{2^{n'}})) = O(g^{-1} (2^{2^{n}}))$, as desired. 
\end{proof}

We now establish the asymptotic equivalence between $\log^*$ and $g^{-1}$.

\begin{lemma}
	For any $n \geq 1$, $\log^* n = \Theta(g^{-1}(n))$. 
\end{lemma}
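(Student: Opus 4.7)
The plan is to compare $g$ directly with the standard tower function $T$ defined by $T(0) = 1$ and $T(k+1) = 2^{T(k)}$, since $\log^* n$ equals (up to an additive constant) the smallest $k$ with $T(k) \ge n$. It then suffices to sandwich $T(k - c_1) \le g(k) \le T(k + c_2)$ on integer points for constants $c_1, c_2$, as the monotone interpolation specified in the definition preserves the asymptotics on real arguments.

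For the easy direction I would establish $g(k) \le T(k-1)$ for all $k \ge 1$ by induction. The base cases $k \in \{1,2,3,4\}$ hold with equality directly from the given values, and for $k \ge 4$ the recursion gives
\begin{align*}
g(k+1) \;=\; 2^{g(k)}/2^k \;\le\; 2^{g(k)} \;\le\; 2^{T(k-1)} \;=\; T(k).
\end{align*}
Hence $g(k) \ge n$ forces $T(k-1) \ge n$, so $g^{-1}(n) \ge \log^* n + 1$, yielding the $\Omega(\log^* n)$ half.

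For the matching upper bound one would like $g(k) \ge T(k - c)$ for some constant $c$, but the naive induction fails: the recursion $g(k+1) = 2^{g(k)-k}$ loses an additive $k$ in the exponent at each step, and this loss is not absorbed by a bound of the form $g(k) \ge T(k-c)$. The key move is to strengthen the hypothesis to $g(k) \ge 2\,T(k-2)$ for all $k \ge 5$. The base case $k = 5$ is the numerical check $g(5) = 2^{12} \ge 32 = 2 T(3)$. For the inductive step,
\begin{align*}
g(k+1) \;=\; 2^{g(k)-k} \;\ge\; 2^{2T(k-2)-k} \;=\; 2^{T(k-2)} \cdot 2^{T(k-2)-k}.
\end{align*}
Since $T(k-2) \ge T(3) = 16 \ge k+1$ for every $k \ge 5$, the slack $2^{T(k-2)-k} \ge 2$ is exactly what is needed to pay for the loss and preserve the leading factor, giving $g(k+1) \ge 2 T(k-1)$. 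This yields $g^{-1}(n) \le \log^* n + 2$ for $n \ge T(3)$, with the finitely many smaller values folded into the constant.

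The main obstacle is identifying the right strengthening in the second induction: without the extra factor of $2$ in the hypothesis, the additive $-k$ in the exponent would slowly erode the bound, whereas the doubled hypothesis provides precisely the surplus needed both to pay for $-k$ and to regenerate itself at the next step. Everything else, including the extension from integer points to the monotone interpolation of $g$ on reals, is routine bookkeeping.
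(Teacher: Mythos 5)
Your proof is correct and takes essentially the same route as the paper: both directions sandwich $g$ against the tower function by induction on integer points (the paper verifies $g(k) \le \log^{*-1}(k+1)$ and merely asserts the other direction, $g(2k) \ge \log^{*-1}(k)$, as ``a simple induction,'' whereas you supply the details via the tighter strengthened hypothesis $g(k) \ge 2\,T(k-2)$). One small wording slip: the chain ``$T(k-2) \ge T(3) = 16 \ge k+1$ for every $k \ge 5$'' is literally false once $k \ge 16$, but the fact you actually need, $T(k-2) \ge k+1$ for all $k \ge 5$, is trivially true (e.g.\ $T(k-2) \ge 2^{k-2} \ge k+1$), so the induction goes through unchanged.
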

\begin{proof}
	Note that $\log^{*-1}(2) = g(1)$, $\log^{*-1}(3) = g(2)$, $\log^{*-1}(4) = g(3)$, $\log^{*-1}(5) = g(4)$, and $\log^{*-1}(k+1) \geq g(k)$ for all integers $k \geq 5$. Thus, we have $\log^* n = O(g^{-1}(n))$.  We can also show $g(2k) \geq \log^{*-1}(k)$ by a simple induction on $k$, thus the proof is omitted.
\end{proof}

\begin{corollary} 
 Each job is disrupted at most $O(\log^* n)$ times.
\end{corollary}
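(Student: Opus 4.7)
The plan is a short chaining of the two immediately preceding lemmas, so I do not expect a substantive obstacle here. The reallocation lemma already establishes that each job suffers at most $O(g^{-1}(2^{2^n}))$ disruptions, and the asymptotic equivalence lemma states that $g^{-1}(m) = \Theta(\log^* m)$ for every $m \geq 1$. Instantiating the latter at $m = 2^{2^n}$ converts the per-job disruption bound into $O(\log^*(2^{2^n}))$.

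To finish, I would invoke the standard fact that the iterated logarithm is essentially invariant under taking a tower of exponentials of bounded height. Concretely, by the defining recursion $\log^*(x) = 1 + \log^*(\log_2 x)$ for $x > 1$, two applications give $\log^*(2^{2^n}) = 2 + \log^* n$, and hence $\log^*(2^{2^n}) = O(\log^* n)$. Chaining the three observations yields the claimed $O(\log^* n)$ bound on the number of disruptions per job.

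The one point I would double-check carefully in writing this up is that the hidden constants in the $\Theta$ of the asymptotic equivalence lemma do not interact badly with the tower $2^{2^n}$: since $\log^*$ grows so slowly that it cannot distinguish $n$ from $2^{2^n}$ beyond an additive constant, the constant factor from $g^{-1}(m) = \Theta(\log^* m)$ is simply absorbed into the outer $O(\cdot)$. No further work is needed, and the corollary follows as a direct consequence of the preceding two lemmas.
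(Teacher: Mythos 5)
Your proposal is correct and matches the paper's intent exactly: the corollary is stated as an immediate consequence of the two preceding lemmas, obtained precisely by instantiating $g^{-1}(m)=\Theta(\log^* m)$ at $m=2^{2^{n}}$ and using $\log^*(2^{2^{n}})=2+\log^* n$. Nothing further is needed.
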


To complete the analysis of our algorithm for the simplified instance, it only remains to show that  resource is never over-allocated.

\begin{claim}\label{claim:allocatedWRR}
 	For any integer $k \geq 1$, the value $\frac{1}{12 g(k)}$ appears at most $2\log_2 g(k+1)$ times in the sequence of $f_{t}(t)$, $f_{t-1}(t)$, $f_{t-2}(t)$, $\ldots$, $f_{1}(t)$. 
\end{claim}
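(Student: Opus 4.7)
The plan is to identify exactly the jobs that contribute the value $\frac{1}{12 g(k)}$ to the sequence, and then use the fact that in this simplified setting consecutive job weights grow by at least a factor of two, so the count of qualifying jobs is controlled by the log of the endpoint ratio of a specific weight interval. This mirrors the sketch given in the footnote for the fully-geometric special case, generalized to allow super-geometric growth.

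By the definition of the algorithm, the main-formula term $\tfrac{1}{12 g(\lfloor g^{-1}(W_t/w_i)\rfloor)}$ equals $\tfrac{1}{12 g(k)}$ precisely when $\lfloor g^{-1}(W_t/w_i)\rfloor = k$, i.e.\ when $g(k) \le W_t/w_i < g(k+1)$. Rearranging, this is exactly the condition
\[
w_i \in \bigl(W_t/g(k+1),\; W_t/g(k)\bigr],
\]
an interval whose endpoint ratio is $g(k+1)/g(k)$.

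Next I would establish that weights at least double in arrival order. Since the cumulative weights $W_0, W_1, \ldots$ are strictly increasing powers of $2$, we have $W_i \ge 2 W_{i-1}$, and so $w_i = W_i - W_{i-1} \ge W_{i-1}$; in particular, $w_{i+1} \ge W_i \ge 2 w_i$. Now let $i_1 > i_2 > \cdots > i_m$ be the indices at which the value $\tfrac{1}{12 g(k)}$ appears in the sequence $f_t(t), f_{t-1}(t), \ldots, f_1(t)$. By the doubling property,
\[
w_{i_1} \;\ge\; 2 w_{i_2} \;\ge\; 4 w_{i_3} \;\ge\; \cdots \;\ge\; 2^{m-1} w_{i_m},
\]
so $w_{i_1}/w_{i_m} \ge 2^{m-1}$. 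But $w_{i_1}$ and $w_{i_m}$ both lie in the interval above, so $2^{m-1} \le g(k+1)/g(k)$, whence $m \le 1 + \log_2 g(k+1) - \log_2 g(k) \le 1 + \log_2 g(k+1) \le 2\log_2 g(k+1)$, where the last step uses $g(k+1) \ge 2$ for $k \ge 1$.

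The argument is essentially a direct calculation and I do not anticipate a real obstacle; the only mild bookkeeping concern is that the statement must refer to the values produced by the main formula, since the clamped branch $A(i,t) = \tfrac{1}{12 \cdot 2^i}$ could coincidentally match $\tfrac{1}{12 g(k)}$ for some $k$. Such clamp contributions come from a sequence summing to at most $\sum_i \tfrac{1}{12\cdot 2^i} = O(1)$, so they can be absorbed as a separate $O(1)$ additive term when bounding the total allocation and do not affect the bound in this claim.
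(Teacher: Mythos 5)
Your overall strategy is the same as the paper's: characterize the event $A(i,t)=\frac{1}{12 g(k)}$ as $g(k) \le W_t/w_i < g(k+1)$, i.e.\ $w_i$ lies in a weight interval of endpoint ratio $g(k+1)/g(k)$, and then bound the number of jobs that can land there using the growth of the weights. However, the growth fact you rely on is false. From $W_i \ge 2W_{i-1}$ you get $w_i = W_i - W_{i-1} \ge W_i/2$, i.e.\ $W_i \le 2 w_i$ --- the \emph{opposite} of the inequality $W_i \ge 2w_i$ you invoke --- so the chain $w_{i+1} \ge W_i \ge 2 w_i$ does not follow, and consecutive weights need not double in this simplified instance. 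Already in the canonical case $W_t = 2^t$ one has $w_0 = w_1 = 1$; more generally after a large jump (say $W_0=1$, $W_1 = 2^{10}$, $W_2 = 2^{11}$) one gets $w_1 = 1023$ and $w_2 = 1024$, a ratio arbitrarily close to $1$. Hence your estimate $w_{i_1} \ge 2^{m-1} w_{i_m}$, and with it $m \le 1 + \log_2\bigl(g(k+1)/g(k)\bigr)$, is unjustified.

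The correct statement, and the one the paper actually uses, is that weights double every \emph{two} arrivals: $w_{t+2} \ge W_{t+1} \ge 2W_t \ge 2w_t$. Feeding this into your argument (the indices $i_1 > \cdots > i_m$ are distinct, so $i_1 - i_m \ge m-1$, and the $w_t$ are nondecreasing) gives $w_{i_1}/w_{i_m} \ge 2^{\lfloor (m-1)/2 \rfloor}$ and hence $m \le 2\log_2\bigl(g(k+1)/g(k)\bigr) + O(1)$; this two-step doubling is precisely where the factor $2$ in the claimed bound $2\log_2 g(k+1)$ comes from. For $k \ge 2$ the slack term is absorbed because $g(k) \ge 2$, and $k=1$ can be checked directly (at most one job can have $w_i > W_t/2$ at any time). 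So your proof is repairable, but as written the key doubling step is wrong. Your closing remark about the clamped branch $A(i,t) = \frac{1}{12\cdot 2^i}$ is reasonable and consistent with how the paper's subsequent lemma treats the two kinds of jobs separately.
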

\begin{proof}
    Observe that $A(i, t) = \frac{1}{12 g(k)}$ if and only if $g(k) \leq W_t / w_i < g(k+1)$. From the fact that $w_{t+2} / w_t \geq 2$ for all $t \geq 0$, we know that 
the value $\frac{1}{12 g(k)}$ can appear in the sequence at most $2 \log_2 (g(k+1) / g(k)) \leq 2 \log_2 g(k+1)$ times.
\end{proof}

\begin{lemma}
	At any point in time, the total allocation made by the algorithm is bounded by $1$. 
\end{lemma}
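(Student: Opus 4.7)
The plan is to upper-bound $\sum_{i=0}^{t} A(i,t)$ at an arbitrary time $t$ and show it never exceeds $1$. I would split the sum according to which branch of the definition of $A(i,t)$ applies to each job.

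For the ``floor'' branch, i.e., jobs with $\frac{w_i}{W_t} < \frac{1}{12 \cdot 2^i}$ and $A(i,t) = \frac{1}{12 \cdot 2^i}$, the total contribution is at most $\sum_{i \geq 0} \frac{1}{12 \cdot 2^i} \leq \frac{1}{6}$ by a geometric series. This part is immediate and does not depend on $g$ at all.

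For the ``main'' branch, where $A(i,t) = \frac{1}{12 g(k)}$ with $k = \lfloor g^{-1}(W_t/w_i) \rfloor$, I would group jobs by the value of $k$ and apply Claim~\ref{claim:allocatedWRR}, which says at most $2 \log_2 g(k+1)$ jobs lie at level $k$. The total contribution is therefore bounded by
$$ \sum_{k \geq 1} \frac{2 \log_2 g(k+1)}{12\, g(k)}. $$
The main technical step is to show this series is at most $\frac{5}{6}$, so that combined with the floor contribution the total is at most $1$. Here one uses the recursive definition $g(k) = 2^{g(k-1)}/2^{k-1}$, which gives $\log_2 g(k+1) = g(k) - k$, so each term is bounded by $\frac{g(k)-k}{6\,g(k)} \leq \frac{1}{6}$; then a finer bookkeeping exploiting the super-exponential growth of $g$ (so that consecutive denominators $g(k)$ blow up much faster than numerators grow) lets one sum the series to a small constant.

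The main obstacle will be nailing down that constant: one has to verify the small-$k$ terms (where the tower hasn't yet kicked in and $g(k)/k$ is not huge) by direct computation, and control the large-$k$ tail via the rapid growth of $g$ to show it is negligible. The specific constant $\frac{1}{12}$ in the algorithm, together with the precise recursive form of $g$, is calibrated exactly so that the floor contribution $\tfrac{1}{6}$ plus the main-branch bound stays at most $1$; if the straightforward estimate above leaves too little slack, a telescoping rewrite $\log_2 g(k+1) = \log_2 g(k) + (g(k)-k-\log_2 g(k))$ can be used to separate a quickly converging ``dominant'' piece from a smaller residual.
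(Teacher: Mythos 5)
Your decomposition is the same as the paper's (floor branch bounded by $\sum_{i \ge 0} \tfrac{1}{12\cdot 2^i} \le \tfrac16$, main branch bounded via Claim~\ref{claim:allocatedWRR} by $\sum_{k\ge 1} \tfrac{2\log_2 g(k+1)}{12\,g(k)}$), but the step where you sum this series is a genuine gap. Under the recursion you invoke, $\log_2 g(k+1) = g(k)-k$, the numerator has the \emph{same} order as the denominator: precisely because $g$ grows like a tower, $k = o(g(k))$, so each term $\tfrac{2(g(k)-k)}{12\,g(k)}$ tends to $\tfrac16$ instead of decaying. The series therefore does not sum to a small constant; over the up to $\Theta(\log^* n)$ levels $k$ that actually occur, the main-branch total is of order $\log^* n$, not $\tfrac56$. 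Your guiding intuition that ``consecutive denominators $g(k)$ blow up much faster than numerators grow'' is exactly backwards here: the super-exponential growth is what makes $\log_2 g(k+1) \approx g(k)$. The proposed telescoping rewrite of $\log_2 g(k+1)$ cannot rescue this, since with the relation $\log_2 g(k+1)=g(k)-k$ the divergence is real, no matter how the constant $\tfrac{1}{12}$ is tuned.

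What makes the paper's proof close is an extra geometric damping factor built into $g$: the count of jobs at level $k$ is bounded by $2g(k)/2^k$ for $k \ge 4$, i.e., the function is calibrated so that $\log_2 g(k+1) \le g(k)/2^k$, which is the form given in the footnote of Section~\ref{sec:geo} (and is what the lemma's proof in the paper actually uses; the displayed recursion $g(k)=2^{g(k-1)}/2^{k-1}$, taken literally as you did, is not consistent with that step). With the $2^{-k}$ damping, each level $k \ge 4$ contributes at most $\tfrac{2g(k)/2^k}{12\,g(k)} = \tfrac{1}{6\cdot 2^k}$, the levels $k=1,2,3$ contribute at most $\tfrac16$ each (there $\log_2 g(k+1)=g(k)$), so the main branch totals at most $\tfrac23$, and adding the $\tfrac16$ floor-branch contribution keeps the total allocation at most $1$. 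To repair your argument you must use a definition of $g$ carrying this $2^{-k}$ factor (or prove the count bound $2\log_2 g(k+1) \le 2g(k)/2^k$ for large $k$ by some other means); without it, the convergence claim at the heart of your main-branch estimate is false.
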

\begin{proof}
	For the sake of analysis, we separately handle jobs $j$ with $f_j(t) \leq \frac{1}{12 \cdot 2^j}$ and the other jobs. For the first type of jobs, the total allocation is at most $\sum_{j \geq 0} \frac{1}{12 \cdot 2^j} \leq 1/6$. For the other jobs, we use 	Claim~\ref{claim:allocatedWRR}, which ensure that there are at most  $2\log_2 g(k+1)$ jobs with allocation $\frac{1}{12 g(k)}$. For any $k \geq 1$, we have $2\log_2 g(k+1) \leq 2 g(k)$. In particular, for any $k \geq 4$, we have $2\log_2 g(k+1) \leq 2 g(k) / 2^{k}$. Therefore, 
	 the total allocation for the second type of jobs is at most $\sum_{k \geq 1} 2\log_2 g(k+1) = 2 \frac{g(1)}{12 g(1)} + 2 \frac{g(2)}{12 g(2)} + 2 \frac{g(3)}{12 g(3)} +   \sum_{k \geq 4}  2 g(k) \frac{1}{12 \cdot 2^k g(k)} \leq 2/3$. 
\end{proof}

\subsubsection{Arbitrary Weights}
 Finally, to extend our algorithm to handle the general case, we propose the following pre-processing step that reduces an arbitrary instance to a simplified instance. Conceptually, partition the jobs in the following way. Assume w.l.o.g. that the first job has weight $1$ by scaling.  Intuitively, we would like $G_i$ to consist of the earliest arriving jobs, that are not in $G_0, \ldots, G_{i-1}$,  with aggregate weight $2^i$ (or a higher value that is a power of 2).  In this way, each group will essentially act like a single job of weight $2^i$.  

Formally, our grouping is defined as follows. In our grouping, a job may belong to either exactly one group or two consecutive groups. When job $j$ belongs to only group $G_i$, $j$ remains to have exactly the same weight $w_j$ in the group $G_i$. If $j$ belongs to two groups $G_i$ and $G_{i+1}$, then the sum of $j$'s weight in both groups is exactly its original weight $w_j$. Our goal is to create groups starting from the first job, so that the weight of groups simulate a simplified instance: that is, $w(G_0) = 1$, and $W(G_i)$ is always a power of two, and $W(G_i) \geq  2 W(G_{i-1})$ for all $i \geq 1$, where $W(G_i) := w(G_0) + w(G_1) + \ldots + w(G_i)$. Here $w(G_i)$ denotes the total weight of jobs in $G_i$.

Towards this end, we let  the first group $G_0$ only have the first job. Then, we clearly have $w(G_0) = W(G_0) = 1$. If the second job's weight $w_1$ is  at most 1, since $W(G_0) + w_1 \leq 2 W(G_0)$, the job only belongs to group $G_1$. Otherwise, let $k := \lfloor \log_2 (W(G_0) + w_1) \rfloor$. Then, $W(G_1) := 2^k$, and job $1$ has weight $W(G_0) + w_1 - 2^k$ in group $G_2$ and the remaining weight $w_1 - (W(G_0) + w_1  - 2^k) =  2^k - W(G_0)$ in group $G_1$. In general, suppose job $j$ is to appear in group $G_{i+1}$ (and possibly group $G_{i+2}$). Let $W(G_{i+1})$ be the total weight of jobs that arrived before job $j$. Then, if $W(G_{i+1}) + w_j \leq 2W(G_i)$, then $j$ only belongs to group $G_{i+1}$. Otherwise, $j$ has weight $W(G_{i+1}) + w_j - 2^k$ in $G_{i+2}$ and the remaining weight $2^k - W(G_{i+1})$ in $G_{i+1}$, where $k := \lfloor \log_2 (W(G_i) + w_j) \rfloor$. 

So, when a new job arrives, the above reduction updates groups. If $G_0, G_1, \ldots G_k$ are the groups created, our algorithm pretends that each group is a single job and allocates resources to groups. Note that the last group $G_k$'s total weight may not be a power of two, but then the algorithm pretends that $w(G_k) = W(G_{k-1})$ by a creating a fictitious job of an appropriate weight in $G_k$. Then, the amount of resource allocated to each group is reallocated to individual jobs belonging to the group in proportion to their weight. If a job appears in two groups, then we simply add up the amount of resource to reallocated to the job in both groups.

\smallskip
It now remains to argue why this reduction works. First, we observe that the number of jobs increased by a factor of at most two in the reduction. Therefore, the number of disruptions that occur to each group is still bounded by $O(\log^* n)$. Further, jobs in the same group are disrupted exactly at the same time. Since each job appears in at most two groups, we have an easy conclusion that each job gets disrupted at most $O(\log^* n)$ times. Next, we can see that resource is not over allocated since each job's weight is preserved in the reduction (if it appears in two groups, its weight in both groups is equal to its original weight). Finally, if the current last group $G_k$'s weight is a power of two, it is easy to see that every job gets an $1/12$-approximate of its fair share, as was the case for the simplified instance. Otherwise, the algorithm could pretend more competition by assuming that $w(G_k) = W(G_{k-1})$. However, it could over-estimate the total weight by a factor of at most 2 since $w(G_k) \leq W(G_{k-1})$. Thus, it follows that every job gets an $1/24$-approximate of its fair share. This completes the proof of the upper bound claimed in Theorem~\ref{thm:arrival}.

\subsection{Lower Bound}
This subsection is devoted to proving the lower bound stated in Theorem~\ref{thm:arrival}, bounding the number of disruptions incurred by any $c$-approximate deterministic algorithm $A$ in the arrival model.
The lower bound instance consists of jobs whose weights  geometrically increase.  Job
$i$ has weight $w_i=2^{i-1}$ and arrives at time $i$. 
  Let $W_t$ denote the total weight of the jobs up through job $t$.  
Since the algorithm $A$ is $c$-approximate, it must be the case that for each job $i$ and each time $t \ge i$,  $A(i, t) \ge \frac{w_i}{cW_t}$.    Let $B$ be a matrix where $B_{i,t}$  is $1$ if job $i$'s is disrupted at time $t$ and $0$ otherwise. See Figure~\ref{fig:TriangleIntro}.

\subsubsection{Overview of the Analysis}
Before we formally prove the theorem, we give a high-level overview of the proof. In this overview, certain less important details, such as constant additive terms, will be ignored to make the key idea transparent. The overview will be based on a geometric view of the matrix $B$.  After all, we only need to show that the matrix $B$ has $\Omega(n \log^*n)$ 1s. To count the number of 1s, we will create non-overlapping triangles within the lower triangle matrix. The created triangles are grouped into $K = \Theta(\log^* n)$ groups, $G_1, G_2, \ldots, G_K$. We will let each triangle in $G_k$ have horizontal (or equivalently vertical) length exactly $h(k+1) - h(k)$ for some function $h(k)$ which will be defined shortly. See Figure~\ref{fig:Triangulation}.

We will find $\Omega(n)$ 1s within triangles in each group, which will lead to the desired lower bound $\Omega(n \log^* n)$. Towards this end, we show that for \emph{each} triangle and  jobs $j$ participating in the ``triangle" (the $j$th row from the bottom intersects the triangle), at least half of them must be disrupted. The key idea is to show that if $j$ doesn't get disrupted within the triangle in $G_k$, $j$'s allocation is at least $\frac{1}{2c \cdot 2^{h(k)}}$, where $h(k)$ is some function that we will define. See Figure~\ref{fig:analysis}.

\begin{figure}
  \centering
  \includegraphics[width=.45\textwidth]{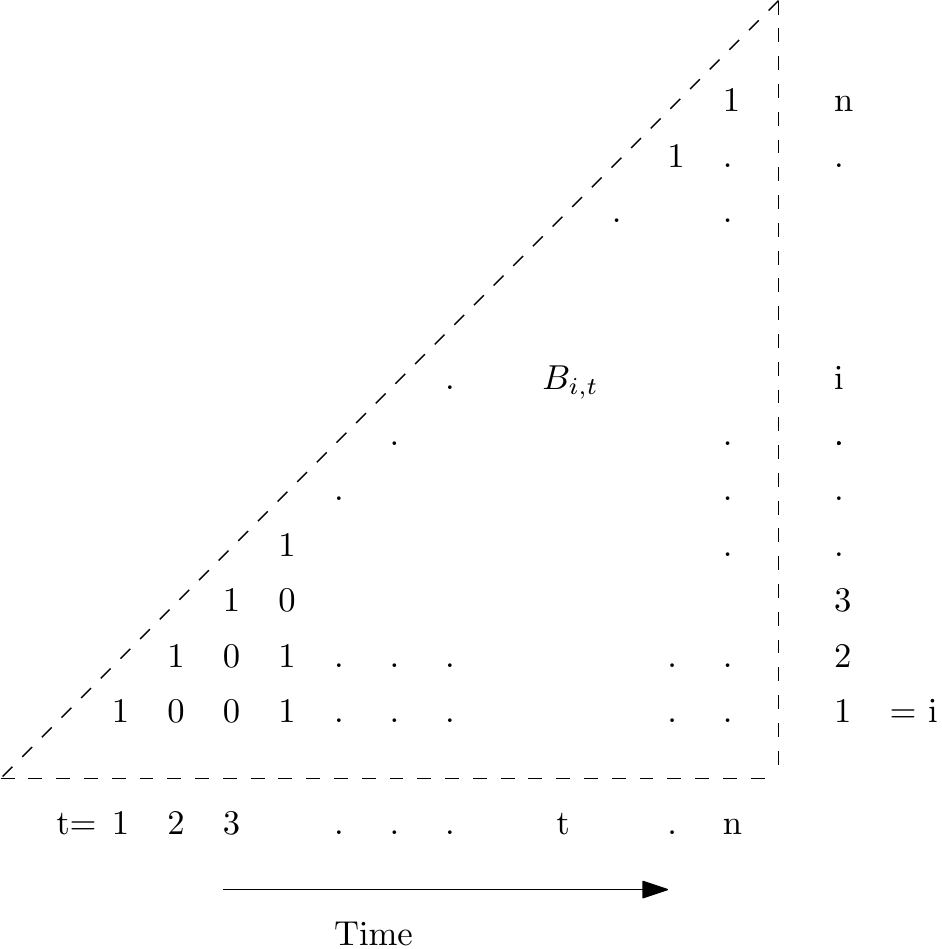}
    \caption{The lower triangle matrix $B$. The entry $B_{i,t}$ is 1 if and only if the job $i$ arriving at time $i$ gets disrupted at time $t$. \label{fig:TriangleIntro}}
\end{figure}

Since there are exactly $h(k+1) - h(k)$ jobs participating in the triangle, if fewer than half of those jobs are disrupted, the amount of resource allocated at the ending time of the triangle is greater than $\frac{1}{2}(h(k+1) - h(k)) \cdot \frac{1}{2c 2^{h(k)}}$. We will define $h$ recursively so that this quantity becomes  more than 1, meaning resource is over allocated.

Thus, we can show that each triangle includes at least half as many 1's as its (either horizontal or vertical) edge length. Due to the disjointness of triangles, this implies that triangles in $G_k$ include 1s whose number is at least half of their edge length, which is $\frac{1}{2}(h(K) - h(k)) \geq h(K) / 4 = n/4$. Since we created $\Theta(\log^* n)$ groups we will have the lower bound.

\subsubsection{Formal Analysis} 
The lower bound instance is formally defined as follows. Job $i$, $i \geq 1$ has weight $w_i=2^{i-1}$ and arrives at time $i$. 
 Job 0 of weight 1 is thought of as a dummy job to keep the cumulative weight to be a power of two. So, we have $W_i := w_0 + w_1 + \ldots w_i = 2^i$.
For a job $i$ and a time $t$ such that $i \leq t$, let  $L(i,t)$ be the latest time $t' \leq t$ such that $B_{i,t}$ is $1$.  In other words, $L(i,t)$ is the last time job $i$'s  was reallocated before time $t$.

We begin with the following claim. 

\begin{claim}\label{claim:lb-rate}
Let $A(i, t)$ denote the allocation to job $i$ at time $t$ by the algorithm $A$.  It must be the case that $A(i, t) \geq \frac{2^{i-1}}{c W_{L(i,t)}} = \frac{2^i}{2c 2^{L(i,t)}}.$
\end{claim}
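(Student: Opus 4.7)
The plan is to combine three simple ingredients: the $c$-approximation guarantee applied at the disruption time, the fact that the allocation is constant between consecutive disruptions, and the closed-form expression $W_s = 2^s$ that holds on this particular instance (because of the dummy job $0$ of weight $1$ and the doubling weights $w_i = 2^{i-1}$).

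First I would invoke the definition of $L(i,t)$: it is the most recent time at or before $t$ at which job $i$'s allocation was changed, and by definition of a disruption, the allocation is unchanged on the interval $(L(i,t), t]$. Therefore
\[
A(i,t) \;=\; A(i, L(i,t)).
\]
Next I would apply the $c$-approximation hypothesis at time $L(i,t)$ itself. Since job $i$ is alive at time $L(i,t) \geq i$, $c$-approximation gives
\[
A(i, L(i,t)) \;\geq\; \frac{w_i}{c \, W_{L(i,t)}} \;=\; \frac{2^{i-1}}{c\, W_{L(i,t)}}.
\]
Finally, I would use the specific structure of the lower-bound instance, namely $W_s = w_0 + w_1 + \cdots + w_s = 1 + 1 + 2 + \cdots + 2^{s-1} = 2^s$, to rewrite the denominator and obtain the second form $\frac{2^i}{2c\, 2^{L(i,t)}}$.

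There is really no obstacle here; the only thing to be slightly careful about is the edge case $L(i,t) = i$, i.e., the ``first disruption'' which corresponds to the arrival of job $i$ itself. In that case the same argument applies, since we may treat the initial assignment of an allocation to a job at its arrival as a disruption (this is the convention implicit in the definition of $B_{i,t}$). Thus the bound holds for every $t \geq i$ at which $L(i,t)$ is defined, and this is all that the subsequent triangle-counting argument needs.
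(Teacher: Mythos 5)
Your proposal is correct and follows essentially the same argument as the paper: the allocation is unchanged on $(L(i,t),t]$, so it equals the allocation fixed at time $L(i,t)$, where $c$-approximation forces it to be at least $\frac{w_i}{c\,W_{L(i,t)}} = \frac{2^{i-1}}{c\,2^{L(i,t)}}$ using $W_s = 2^s$ from the dummy job. Your remark about treating a job's arrival as its first reallocation matches the convention implicit in the paper's definition of $B_{i,t}$, so no gap remains.
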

\begin{proof}
By definition job $i$ was not reallocated during $(L(i,t),t]$. At time $L(i,t)$ job $i$ required a $c$-approximate allocation of its fair share $\frac{2^{i-1}}{2^{L(i,t)}}$ and this remained the same up to time $t$.
\end{proof}

Next,  we formally define disjoint triangles. 
Define the following times recursively.  Let $h(1)  = 1$ and $h(k+1) - h(k) = 8c \cdot 2^{h(k)}$.  Let $h(0) = 0$ for convenience of notation.  Here $k \in [K]$ and $h(K) = n$.  For the lower bound, we will assume that $n$ is chosen such that  $\log^* n$ is integer.  Notice that $K = \Theta(\log^* n)$.  Notice that $\{h(k)\}$ partitions all job arrival times.

Using this, we recursively define \emph{non-overlapping} triangles.  There will be $K$ groups of triangles $G_1, G_2, \ldots G_{K}$.   Let $\ell(k) = h(k+1) - h(k)$.  The length $\ell(k)$ is the length and height of each triangle in $G_{k}$. The group $G_k$ will contain $(n - h(k))/\ell(k)$ triangles.  The $i$th triangle in the group $G_k$ corresponds to jobs $ i\ell(k)+1 $ to $(i+1)\ell(k)$ for $i \in \{0, 1, \ldots (n - h(k))/\ell(k)\}$. Let $J_{k,i}$ be the set of jobs in the $i$th triangle in $G_k$.  Each job in the $i$th triangle will be associated with a set of time steps. The job time pairs will form a triangle with two equal length sides. The $j$th job in the group in arrival order is associated with times $h(k) + i\cdot\ell(k) +(j-1)$ to $h(k) + (i+1)\ell(k)$ for $j = 1, 2, \ldots, \ell(k)$. Let $T_{k,i,j}$ be the time steps corresponding to job $j$ in the $i$th triangle for group $G_k$.

We observe the following.

\begin{claim}
Each job and time pair is associated with at most one triangle. 
\end{claim}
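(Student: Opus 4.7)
My plan is to uniquely identify, for each candidate pair $(a,\tau)$ (meaning job index $a$ at time $\tau \ge a$), the triangle that can contain it. The main tool will be the ``age offset'' $d(a,\tau) := \tau - a + 1$: I will argue that $d$ pins down the group index $k$, and then $a$ pins down the position $i$ within $G_k$, so the triangle $T_{k,i,\cdot}$ is forced.

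First, I would handle two triangles in the same group $G_k$: by construction, the $i$-th triangle uses the block of jobs $J_{k,i} = \{i\ell(k)+1,\ldots,(i+1)\ell(k)\}$, and these blocks are pairwise disjoint across $i$. Hence distinct triangles in a single group share no job, and in particular no (job, time) pair.

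Next, for triangles in different groups, I would substitute $a = i\ell(k)+j$ into the defining time interval $\tau \in [h(k) + i\ell(k) + (j-1),\ h(k) + (i+1)\ell(k)]$ and subtract $a-1$ throughout to obtain
\[
d(a,\tau) \;\in\; \bigl[\, h(k),\ h(k) + \ell(k) - j + 1\,\bigr] \;\subseteq\; [\, h(k),\ h(k+1)\,].
\]
Since the intervals $[h(k),h(k+1)]$ overlap only at isolated endpoints, the value $d(a,\tau)$ determines the group $k$ (modulo these boundary points). Once $k$ is fixed, the triangle index is forced by $a$ alone: $a \in J_{k,i}$ holds for exactly one value $i = \lfloor (a-1)/\ell(k)\rfloor$. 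Combining the two steps yields uniqueness of the triangle.

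The main subtlety I expect is the boundary case $d = h(k+1)$, where $(a,\tau)$ formally lies on the interface between $G_k$ and $G_{k+1}$. I would resolve this either by a consistent tie-breaking convention (e.g., always assigning such a boundary pair to the later group) or by reading the upper endpoint of the defining time interval as open, so that $d \in [h(k), h(k+1))$ strictly. Either interpretation makes the intervals truly disjoint and does not affect the counting that follows, since at most one pair per pair of adjacent groups is affected.
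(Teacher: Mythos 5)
Your proposal is correct, and it is worth noting that the paper itself offers no proof of this claim at all --- it is stated as an observation that is supposed to follow directly from the construction --- so your argument is a genuine formalization rather than a rederivation. The two steps you take are exactly the right ones: within a group $G_k$ the triangles use pairwise disjoint job blocks $J_{k,i}$, and across groups the offset $d(a,\tau)=\tau-a+1$ lands in $[h(k),h(k+1)]$ for any pair assigned to a triangle of $G_k$, which pins down $k$ since $h$ is strictly increasing. Your boundary remark is not just a formality: with the time intervals read as closed, as the paper literally writes them ($\tau$ from $h(k)+i\ell(k)+(j-1)$ to $h(k)+(i+1)\ell(k)$), the pair consisting of the first job of a triangle in $G_k$ at that triangle's last time coincides with that same job's first associated time in its triangle of $G_{k+1}$, so adjacent groups really do share isolated pairs with $d=h(k+1)$. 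The paper implicitly intends the disjoint (half-open) reading --- it calls the triangles ``non-overlapping,'' the figure caption says each triangle includes no entries on its diagonal edge, and the proof of Lemma~\ref{lem:kjt} treats the time range of the first job as ending at $h(k+1)-1$ --- so your fix (drop one endpoint or break ties consistently) is exactly what is needed, and as you say it affects at most one pair per pair of adjacent groups and leaves the counting in Lemma~\ref{lem:triallocate} and the final $\Omega(n\log^* n)$ bound untouched.
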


Next we prove a key property of the jobs associated with each triangle.

\begin{lemma}\label{lem:kjt}
Consider any group $G_k$ and the $i$th triangle in the group. Let  $t = h(k) + (i+1)\ell(k)$ be the last time associated with the triangle. For each job $j \in J_{k,i}$ if $j$ is not disrupted during a time in $T_{k,i,j}$ then  $L(j,t)-j \leq h(k)$.
\end{lemma}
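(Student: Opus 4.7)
The plan is to derive the inequality directly from the definitions of $T_{k,i,j}$ and $L(j,t)$; the content of the lemma is essentially bookkeeping that links the triangulation's geometry to the allocation bound in Claim~\ref{claim:lb-rate}.

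First I would fix $j \in J_{k,i}$ and write $j = i\ell(k) + m$, where $m \in \{1, 2, \ldots, \ell(k)\}$ is the position of $j$ among the jobs of the $i$th triangle in arrival order (this is well-defined since $J_{k,i} = \{i\ell(k)+1,\ldots,(i+1)\ell(k)\}$). Substituting into the definition of $T_{k,i,j}$ gives
\[
T_{k,i,j} \;=\; \bigl[\,h(k) + i\ell(k) + (m-1),\; h(k) + (i+1)\ell(k)\,\bigr] \;=\; \bigl[\,h(k) + j - 1,\; t\,\bigr],
\]
where $t = h(k) + (i+1)\ell(k)$ is the last time associated with the triangle, as in the lemma statement.

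Next I would invoke the hypothesis that $j$ is not disrupted at any time in $T_{k,i,j}$, which by the definition of $B$ means $B_{j,t'} = 0$ for every integer $t' \in [h(k)+j-1,\,t]$. Since $L(j,t)$ is the latest $t' \le t$ with $B_{j,t'} = 1$, this value must lie strictly before the interval $T_{k,i,j}$. Consequently $L(j,t) \le h(k) + j - 2$, and rearranging gives $L(j,t) - j \le h(k) - 2 < h(k)$, which is the claimed inequality.

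The single substantive observation is that the starting time of $T_{k,i,j}$ is precisely $h(k) + j - 1$, so that the offset between the arrival time $j$ and the earliest time in the associated interval is exactly $h(k) - 1$; this is exactly the feature of the triangulation that will let one combine the lemma with Claim~\ref{claim:lb-rate} to conclude $A(j,t) \ge \tfrac{1}{2c \cdot 2^{h(k)}}$ for any undisrupted job. I do not expect a real technical obstacle here, since once $T_{k,i,j}$ is rewritten in terms of $j$ the result reduces to unpacking definitions; the only care needed is verifying the indexing conversion between the within-triangle position $m$ and the global job index $j$.
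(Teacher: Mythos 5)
Your proof is correct and rests on the same key fact as the paper's: the earliest time in $T_{k,i,j}$ is $h(k)+j-1$, so a job undisrupted throughout its interval was last reallocated strictly before it, giving $L(j,t) \le h(k)+j-2$ and hence $L(j,t)-j \le h(k)$. The paper establishes this same offset by an induction over the jobs within a triangle, whereas you get it by direct substitution of $j = i\ell(k)+m$ into the definition of $T_{k,i,j}$ — a presentational difference only.
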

\begin{proof}
Fix a group $k$. We prove the lemma by induction on the jobs $j$. Formally, we show a stronger statement where $t_j-j \leq h(k-1)$  where $t_j$ is the earliest time in $T_{k,i,j}$ and $i$ is the unique triangle $j$ contributes to in the $k$th group. First consider the case where $i=0$ and fix $j=1$, the lowest indexed job  in $J_{k,i}$.  By definition $T_{k,i,j}$ contains all of the times from $h(k)$ to $h(k)+ \ell(k) = h(k+1)-1 $.   Thus, if $j$ is not disrupted during $T_{k,j,i}$ then the last time $j$ was reallocated was before time $h(k)$, and $L(j,t) \leq h(k)$.  

Now consider any $j$ and it's associated triangle $i$ in the $k$th group.  Inductively, we know that $t_{j-1}-(j-1) \leq h(k)$ for job $j-1$.  By definition of the triangles $t_j = t_{j-1}+1$.  Thus $t_j$ increases by one, implying  $t_{j}-j = t_{j-1}-(j-1) \leq h(k)$ and the lemma follows.
\end{proof}

The next lemma bounds the number of disruptions for job and time pairs inside each triangle.

\begin{lemma}\label{lem:triallocate}
Fix a group $G_k$ for $k \in [K-1]$ and the $i$th triangle in the group for  $i \in \{0, 1, \ldots (n - h(k))/\ell(k)\}$. It is the case that at least $\ell(k)/2$ jobs $j$ in $J_{k,i}$ are disrupted at their corresponding time steps in  $T_{i,j,k}$.
\end{lemma}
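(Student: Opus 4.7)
The plan is to argue by contradiction. Suppose for contradiction that fewer than $\ell(k)/2$ jobs in $J_{k,i}$ are disrupted at one of their associated times in $T_{k,i,j}$; then strictly more than $\ell(k)/2$ jobs $j \in J_{k,i}$ are undisrupted throughout their entire window $T_{k,i,j}$. Let $t := h(k) + (i+1)\ell(k)$ be the final time of the triangle, which is the right endpoint of every $T_{k,i,j}$ and therefore lies in $T_{k,i,j}$ for every $j \in J_{k,i}$. For any such undisrupted $j$, Lemma \ref{lem:kjt} gives $L(j,t) \leq j + h(k)$, and substituting this into Claim \ref{claim:lb-rate} yields
$$ A(j,t) \;\geq\; \frac{2^j}{2c \cdot 2^{L(j,t)}} \;\geq\; \frac{2^j}{2c \cdot 2^{\,j+h(k)}} \;=\; \frac{1}{2c \cdot 2^{h(k)}}. $$
The key feature of this inequality is that the lower bound on $A(j,t)$ is \emph{uniform} across all the undisrupted jobs in $J_{k,i}$, independent of $j$.

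Next I would sum this bound over the strictly more than $\ell(k)/2$ undisrupted jobs. Since we are in the arrival-only model and every job in $J_{k,i}$ arrived by time $t$, these jobs are all alive at $t$, so their allocations must be among the terms in $\sum_{j' \le t} A(j',t)$. Thus the total resource allocated at time $t$ strictly exceeds
$$ \frac{\ell(k)}{2} \cdot \frac{1}{2c \cdot 2^{h(k)}} \;=\; \frac{\ell(k)}{4c \cdot 2^{h(k)}}. $$
The recursion $\ell(k) = h(k+1) - h(k) = 8c \cdot 2^{h(k)}$ was calibrated precisely so that this quantity equals $2$, which exceeds the unit supply of the resource. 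Since any feasible allocation satisfies $\sum_{j'} A(j',t) \le 1$, this is a contradiction, and therefore at least $\ell(k)/2$ jobs in $J_{k,i}$ must be disrupted at some time in their associated window.

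The main obstacle here is not the calculation itself but the conceptual step of recognizing that Lemma \ref{lem:kjt} couples all the undisrupted jobs in a triangle through the common exponent $h(k)$, so that their guaranteed allocations can be added to overflow the resource. Once that coupling is seen, the proof reduces to invoking Claim \ref{claim:lb-rate} and checking that the recursive definition of $\ell(k)$ is exactly tight for the desired contradiction; no delicate estimates are needed.
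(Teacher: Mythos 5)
Your proof is correct and follows essentially the same route as the paper: a contradiction argument that combines Claim~\ref{claim:lb-rate} with Lemma~\ref{lem:kjt} to get the uniform lower bound $\frac{1}{2c\cdot 2^{h(k)}}$ on the allocation of each undisrupted job at the triangle's final time, and then sums over more than $\ell(k)/2$ such jobs to exceed the unit resource via $\ell(k)=8c\cdot 2^{h(k)}$. No gaps; this matches the paper's argument.
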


\begin{proof}
For the sake of contradiction say that it is not the case.  Let $t = h(k) + (i+1)\ell(k)$ be the latest time in $T_{i,k,j}$ for all jobs $j$ associated with the triangle.  By Claim~\ref{claim:lb-rate}, job $j$ must be processed at a rate of $\frac{2^{j-1}}{c2^{L(j,t)}} = \frac{1}{2c2^{L(j,t)-j}} \geq \frac{1}{2c 2^{h(k)}}$ at time $t$. The last inequality follows from Lemma~\ref{lem:kjt}. There are $\ell(k)$ jobs in $J_{k,i}$.  If less than $\ell(k)/2$ are reallocated during their corresponding times in $T_{i,j,k}$, then their total allocation at time $t$ is greater than the following.

\begin{eqnarray*}
&& \frac{1}{2}\ell(k) \cdot  \frac{1}{2c 2^{h(k)}} \\
&=&\frac{1}{2}(h(k+1) - h(k)) \cdot  \frac{1}{2c 2^{h(k)}} \;\;\;\; \mbox{[Definition of $\ell(k)$]}\\
&\geq& 2 \;\;\;\;\mbox{[Definition of $h(k+1) = h(k) + 8c2^{h(k)}$]}
\end{eqnarray*}

This contradicts the total available amount of resource being $1$ at time $t$, and the lemma follows.
\end{proof}

We are now ready to prove the lower bound in Theorem~\ref{thm:arrival}.
Consider any group of triangles $G_k$ for $k \in [K-1]$. The jobs indexed $1$ to $((n - h(k))/\ell(k) +1)\ell(k)$ are associated with the group.  Notice that $h(k) \leq \frac{n}{2}$ for $k \in  [K-1]$ by definition of $h(k)$.  Thus, the number of jobs within each group is at least $n/2$. By definition each job appears in at most one triangle in the group.  Further, by Lemma~\ref{lem:triallocate} half of the jobs associated with each triangle are disrupted at times associated with the triangle.  Thus, there are at least $\frac{n}{4}$ disruptions at time job pairs associated with triangles in group $G_k$.    Knowing that there are $\Theta(\log^*n)$ groups $G_k$, we have that the total number of disruptions is $\Omega(n\log^*n)$, proving the lower bound.

\begin{figure}[htbp]
  \centering
  \includegraphics[width=.45\textwidth]{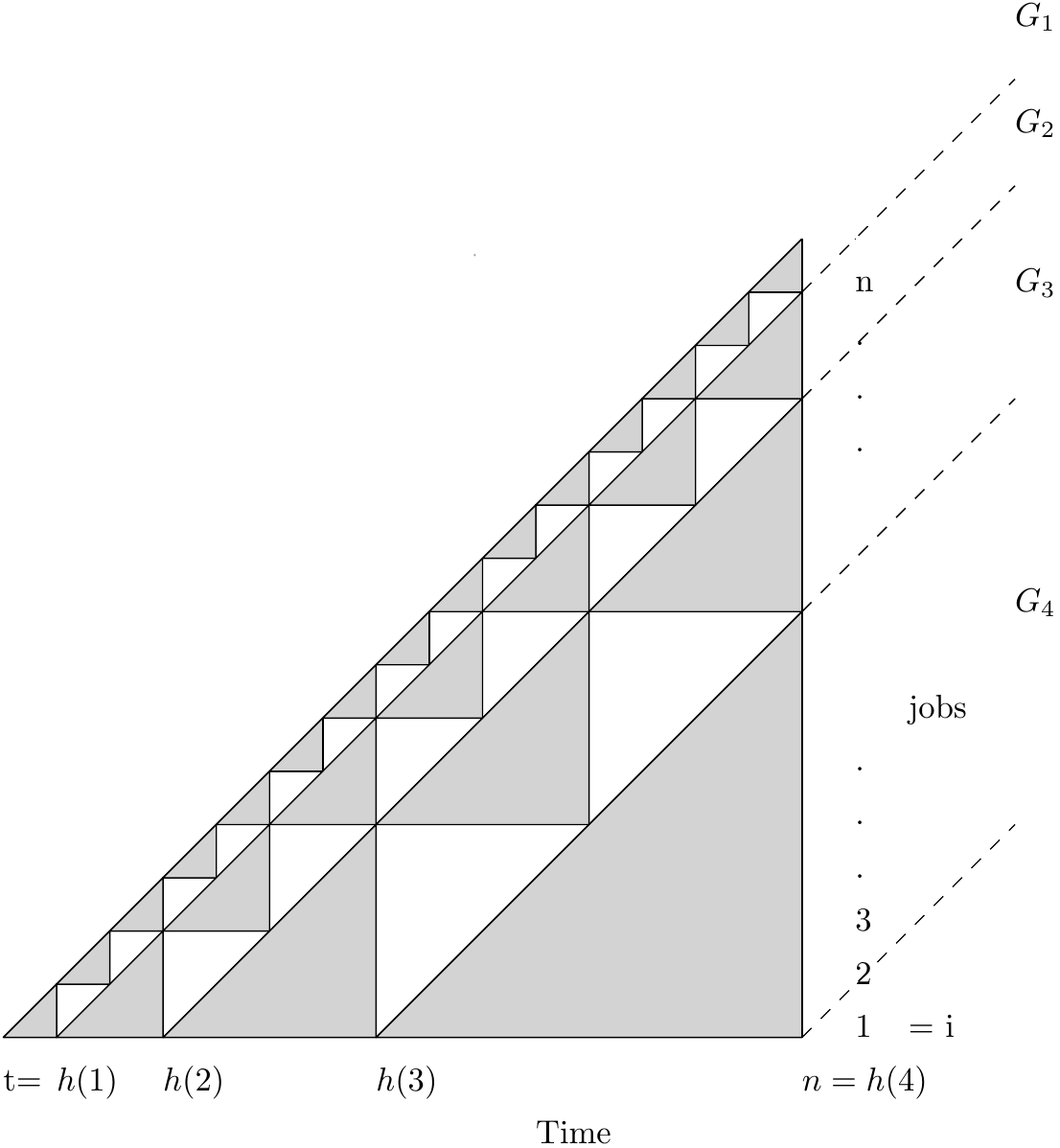}
    \caption{Disjoint triangles within the lower triangle matrix $A$. Here, each triangle includes no entries on the diagonal edge. Triangles are grouped into $G_1, G_2, G_3, \ldots, G_K$. All triangles in $G_k$ has length $h(k+1) - h(k)$.      For better visualization, this figure assumes that $h(k) = 2^{k}-1$ but the actual function $h(k)$ we use in the formal proof is  very similar to $\log^{*-1} k$, which grows much faster. Also, this figure assumes $K = 4$. 
    \label{fig:Triangulation}}
\end{figure}

\begin{figure}[htbp]
  \centering
  \includegraphics[width=.45\textwidth]{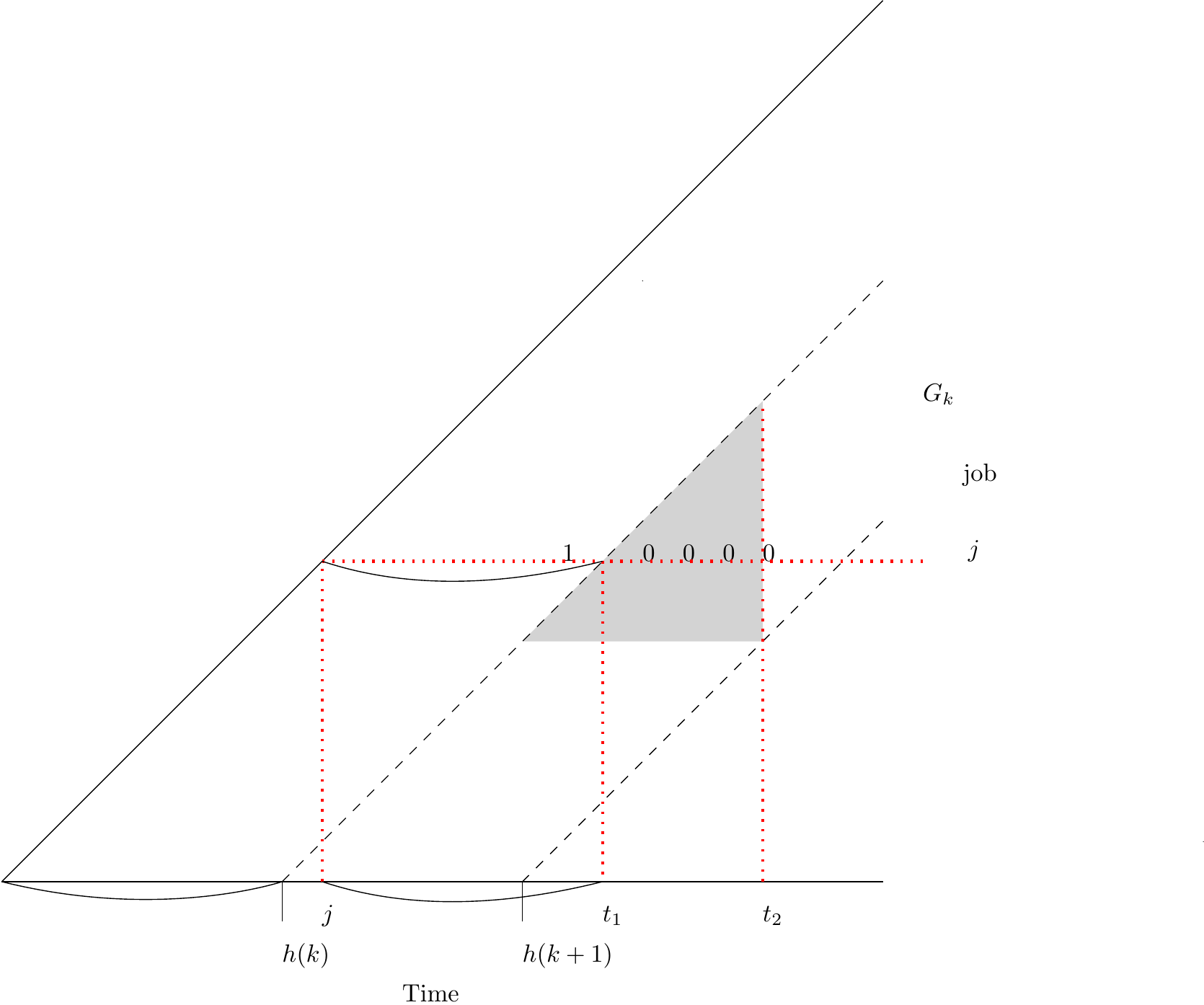}
    \caption{In this figure, a triangle $T$ in group $G_k$ is shaded. Note that the triangle has length exactly $h(k+1) - h(k)$. The job $j$ participates in the triangle, as the horizontal line $l$ includes $j$ intersects the triangle $T$. This figure illustrates how we can show that $j$ has a significantly large allocation at time $t_2$, which is the ending time of $T$, if $j$ is not disrupted within $T$. This  is why the line $l$ has no 1s within $T$ in the figure. Since $j$ is not disrupted during $(t_1, t_2]$, $j$'s allocation must be at least $\frac{w_j}{c \cdot W_{t_1}} = \frac{1}{2c 2^{t_1 - j}} = \frac{1}{2c 2^{h(k)}}$. The fact $t_1 - j = h(k)$ is immediate from the three edges indicated by arcs. 
    \label{fig:analysis}}
\end{figure}

\section{Weighted Fairness: Departures and Proof of Theorem~\ref{thm:departure}}
\label{sect:arrivaldeparture}

This section is devoted to proving     Theorem~\ref{thm:departure}. 
We first discuss intuitions on why minimizing the number of disruptions is more challenging if jobs can both arrive and depart. Recall that in the arrival model a job's allocation has to change if its fair share significantly changes -- say  from $1/s$ to $1/ 2^s$. In other words, this requires the arrival of many new jobs of higher weights. The next reallocation requires even many more new jobs' arrival to further increase the total weight of jobs. However, if jobs can depart we can repeat this process without keeping increasing the total weight. Thus, we can effectively create an instance where a large number of jobs are repeatedly disrupted. 

Before we prove the theorem, we reproduce it for convenience.

\begin{startheorem}
Consider weighted fair share policies with both job arrivals and departures.
For every $c$-approximate deterministic algorithm $A$, there is an instance that causes $A$ to make $\Omega(  n^{1+ 1/(4c+1)})$ disruptions. 
\end{startheorem}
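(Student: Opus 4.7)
\medskip
\noindent\textbf{Proof plan.}
The plan is to build an adversarial instance that exploits departures to force disruptions repeatedly on a common set of persistent jobs. The instance would consist of $s$ ``base'' jobs with geometrically-scaled weights (mirroring the arrival-only lower bound of Section~\ref{sect:logstar}), together with a sequence of $T$ ``attack rounds,'' each consisting of a constant number of arrivals and departures. The parameters would satisfy $n = \Theta(s+T)$, with specific values chosen at the end to optimize the bound.

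Within each attack round, the adversary would add one or a few heavy jobs (forcing the algorithm to squeeze the allocations of many base jobs to make room) and then remove them (forcing the algorithm to restore those base-job allocations back to within their $c$-approximation floors). I would adapt the disjoint-triangle counting argument from the lower bound proof of Theorem~\ref{thm:arrival} to show that each round forces $\Omega(s/c^{O(1)})$ fresh disruptions, accounting for both the add-phase squeeze and the remove-phase restoration.

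Summing across rounds is where the main challenge lies. A deterministic algorithm can anticipate the attacks and pre-pay by over-allocating base jobs above their $c$-approximation floors; since $A(j,t) \ge I(j,t)/c$ permits a factor-$c$ slack per job and the total resource sums to $1$, the algorithm has a bounded ``pre-pay budget'' of $O(c)$ total. By choosing the weights of the heavy attack jobs adaptively as a function of the algorithm's current allocation state, the adversary ensures each round depletes this stored slack by a predictable amount, so amortization cannot hide more than a constant-factor-in-$c$ number of disruptions per round. The final bound follows by balancing $\Omega(Ts/c^{O(1)})$ total disruptions against the budget $n = \Theta(s+T)$, choosing $s = \Theta(n^{4c/(4c+1)})$ and $T = \Theta(n^{1/(4c+1)})$ to get $\Omega(n^{1+1/(4c+1)})$, with the specific exponent $4c+1$ arising from the product of the per-job factor-$c$ slack with the constants appearing in the triangle argument.

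The hardest step will be the amortization argument: rigorously showing the algorithm's over-allocation savings cannot swamp the disruptions forced per round. A naive analysis gives only a linear bound because the algorithm could in principle over-allocate once and coast through many rounds without further disruption. To defeat this, the argument needs a potential function that tracks the total stored slack across all jobs, a proof that this potential is bounded by the unit resource budget, and a verification that each adaptively-chosen attack round drains the potential by at least $\Omega(s/c^{O(1)})$ units. Since the theorem only requires existence of some bad instance, the attack weights are allowed to depend on the algorithm's prior allocations, which gives the adversary enough flexibility to execute this drain consistently.
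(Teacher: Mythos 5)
There is a genuine gap at the heart of your plan: the claim that each constant-size attack round (add a heavy job, then remove it) forces $\Omega(s/c^{O(1)})$ disruptions among $s$ base jobs of geometrically increasing weights is false. The $c$-approximation constraint is only a lower bound on each job's allocation, and geometric weights concentrate fair share in a handful of jobs, so the algorithm can defeat your instance with an essentially static allocation: give base job $j$ the fixed amount $w_j/(c\,2^s)$ forever (this is valid at every time because the total weight never drops below $2^s$, the base weight), which consumes only about $1/c$ of the resource; for $c\ge 2$ the remaining $1-1/c \ge 1/c$ suffices to serve any single heavy attacker at its floor, so no base job is ever disrupted and each round costs $O(1)$ disruptions. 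Your proposed fixes (adaptive attack weights, a potential function tracking slack) are aimed at the right worry, but they cannot rescue the construction, because the adversary never forces the allocation to be \emph{spread} over many jobs. There is also an arithmetic slip: with $s=\Theta(n^{4c/(4c+1)})$ and $T=\Theta(n^{1/(4c+1)})$ you get $Ts=\Theta(n)$ total disruptions, i.e.\ only a linear bound even if the per-round claim held; the theorem needs polynomially many disruptions \emph{per departure}, i.e.\ an amplification mechanism.

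The paper's construction supplies exactly the two missing ingredients. First, each ``level'' is a batch of \emph{many identical} jobs: a type-$k$ batch has $M/b^k$ jobs of weight $a^k$ with $a=2b$, so its aggregate weight doubles from level to level while its size shrinks by a factor $b$. When a new batch arrives it carries at least half the total weight, so \emph{every} job in it must individually get $\ge b^k/(2cM)$; the forced allocation is uniform over the batch and cannot be pre-paid by a few slack-carrying jobs. Second, stacking such batches exhausts the unit resource after at most $4c+1$ levels, so the algorithm must disrupt at least half of some lower batch, which has $\Omega(b)$ times more jobs than all batches above it; those disruptions are charged to the departing higher-type jobs, giving $\Omega(b)$ disruptions per departure. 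The phase then ``rewinds'' via departures and repeats, and the constraint that the top batch have at least one job forces $b\le M^{1/(4c+1)}$, which is precisely where the exponent $1/(4c+1)$ comes from. Your plan lacks both the uniform-batch structure and this cascading/charging amplification, so as written it would not establish the stated bound.
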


Time will be divided into phases. At each time $t$ in a phase, other than the last time, a batch of jobs arrive, and no jobs depart. Batches will be of different types. 
A type
$k$ batch consists of $M/b^k$ jobs, each with weight $a^k$. 
Here $a = 2b \ge 4$. There will be a most one batch of each type alive at any time. So when the time is understood, we will use $B_k$ to refer to the batch of type $k$ alive at that time. So $B_k$ has a factor $b$ more jobs than $B_{k+1}$, but the jobs in $B_{k+1}$ have aggregate weight that is twice the aggregate weight of the jobs in $B_k$. 
We say a batch $B_k$ is disrupted at a particular time if at least half of the jobs in the batch have been disrupted since the last time
that the largest  batch type of an alive batch was $k$. 

At time 0, a type 0 batch $B_0$ arrives. Now consider a time $t > 0$. 
If the alive batches are $B_0, \ldots, B_{k-1}$, and no batches have yet been disrupted in this phase, then a type $k$ batch $B_k$  arrives at time $t$. If on the other hand, there was a batch that was disrupted at time $t-1$,  let $k$ be the smallest type such that $B_k$ was disrupted at time $t-1$. Then time $t$ is the last time in this phase, and jobs in batches $B_j$, $j > k$, depart at time $t$. Thus heading into the next phase,
the alive batches are $B_0, \ldots, B_k$.
 The input terminates after the first phase where at least $3M$ jobs have arrived over all phases.

We begin the analysis by bounding $A$'s allocation of the resource
to a batch $B_k$ between a time $r$  such that $B_k$ was the alive batch of highest
type,  until the next time $s$ that $B_k$ is disrupted
We claim that in aggregate the jobs in $B_k$ must
be allocated a $\frac{1}{4c}$ fraction of the resource at each time
in the range $[r, s]$.
To see why this is the case, consider the time $r$.
The aggregate weight of the alive jobs in batches $B_0, \ldots B_{k-1}$ is
$\sum_{i=0}^{k-1} \frac{M a^i}{b^i} \le \left( \frac{M a^{k}}{b^k}\right)\left(\frac{b}{a -b}\right) = \frac{M a^{k}}{b^k}$.
Thus the aggregate weight the jobs in $B_k$ is at least  the aggregate weight of the jobs in batches $B_1, \ldots, B_{k-1}$.
Thus as $A$ is $c$-approximate,
in aggregate the jobs in $B_k$  must be allocated a $\frac{1}{2c}$ fraction of the resource at time $r$. Thus at time $r$,  each job in $B_k$ must be allocated a $\frac{b^{k}}{2cM}$ fraction
of the resource. 
As long as $B_k$  is not disrupted, at least half the jobs in $B_k$ must thus be allocated 
a $\frac{b^{k}}{2cM}$ fraction of the resource. Thus we conclude
in aggregate the jobs in the batch must be allocated a $\frac{1}{4c}$ fraction of the resource until $B_k$ is disrupted.
Thus  no batch of type $4c+2$ can ever arrive, as a batch of type
$4c+1$ must cause a disruption.

Now consider the time $s$ when $B_k$ was disrupted,
meaning the number of jobs disrupted in $B_k$ since time $r$ 
is at least $\frac{M}{2b^k}$. Let the batches
of higher type at time $s$ be  $B_{k+1}, \ldots B_\ell$. 
These disruptions in $B_k$ are then charged equally to the 
departing jobs in $B_{k+1}, \ldots, B_\ell$. As the number of jobs in 
$B_{k+1}, \ldots B_\ell$ is 
$\sum_{j=k+1}^\ell M/b^j \le \sum_{j=k+1}^\infty \frac{M}{b^j} = \left( \frac{M}{b^k} \right) \left( \frac{1}{b-1} \right)$, each job in $B_{k+1}, \ldots B_\ell$ is charged at least $\frac{b-1}{2}$. 
As $\sum_{j=0}^\infty \frac{M}{b^j} \le 2M$ when $b \ge 2$, there are at most $2M$ jobs at the end that have not been charged. Thus at least $M$ jobs have been charged at least $\frac{b-1}{2}$. Thus the number of disruptions per arrival is at least $\frac{b-1}{6}$.

In order for this construction to be well defined, we need that that highest type batch has
at least 1 job. So we need that $\frac{M}{b^{4c+1}} \ge 1$, as $ 4c+1$
is the highest possible batch type. 
This is equivalent to $b \le M^{1/(4c+1)}$.
Thus we can conclude that the number of disruptions for
$A$  is $\Omega(  n^{1+ 1/(4c+1)})$. And as $W \le a^{4c+1} = (2b)^{4c+1}$, the number
of disruptions caused by $A$ is also $\Omega(n W^{1/(4c+1)})$.

\section{Weighted Fairness: Random Weights and Proof of Theorem~\ref{thm:randomdeparture}}
\label{sect:randomarrivaldeparture}

This section is devoted to proving Theorem~\ref{thm:randomdeparture}, which
we reproduce here for convenience.

\begin{startheorem}
Consider weighted fair share policies with both job arrivals and departures.
Assume that the weights $w_1, \ldots, w_n$ of the jobs are arbitrary,
but the assignment of these weights to the $n$ jobs is uniformly random.\footnote{This random permutation model clearly includes the case when jobs' weights are sampled i.i.d.}  
In this setting there is an  $4$-approximate randomized algorithm $A$,
for which the expected number of disruptions per arrival and per departure is 
at most 5. 
\end{startheorem}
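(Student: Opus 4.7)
My approach is to reduce the weighted problem to a uniform one by bucketing jobs by weight class, then running the uniform $O(1)$-disruption algorithm of~\cite{dynamicFairDivision} inside each bucket and using the random-permutation assumption to amortize the cost of cross-bucket rebalancing. Round each weight $w_j$ down to the nearest power of $2$, setting $\hat w_j = 2^{\lfloor \log_2 w_j \rfloor}$, and group alive jobs into buckets $B_k(t) = \{j \in N^t : \hat w_j = 2^k\}$, with $n_k(t) = |B_k(t)|$ and $\hat W(t) = \sum_k n_k(t) 2^k$. Inside bucket $B_k$ every job's ideal fair share lies within a factor $2$ of $2^k/\hat W(t)$, so within the bucket the problem is effectively uniform.

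The algorithm would reserve a declared budget $\tilde R_k$ of the total resource for bucket $B_k$, initialised to the true share $R_k = n_k 2^k/\hat W$, and run the uniform constant-approximation, $O(1)$-disruption algorithm of~\cite{dynamicFairDivision} inside $B_k$ against that budget. Intra-bucket arrivals and departures are then absorbed at expected cost $O(1)$ disruptions per in-bucket event. To avoid disrupting all of $B_k$ every time $R_k(t)$ moves, I would update $\tilde R_k$ (restarting the in-bucket algorithm, which disrupts the $n_k$ jobs in $B_k$) only when the true ratio $R_k(t)$ has moved by more than a small constant factor from $\tilde R_k$. Combining the factor-$2$ loss from rounding weights to a power of $2$, a bounded slack between $\tilde R_k$ and $R_k$, and the constant factor coming from the in-bucket uniform algorithm gives an $O(1)$-approximation to fairness; tuning the thresholds carefully should match the constant $4$ claimed in the theorem.

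Bounding the amortised cross-bucket cost reduces to showing that, under the random-permutation assumption, the expected number of events between consecutive rebalancings of bucket $B_k$ is $\Omega(n_k)$: the total expected rebalancing cost for bucket $k$ is then $n_k \cdot \E[\text{rebalancings of }B_k] = O(\text{events touching }B_k)$, which sums to $O(\text{total events})$ across $k$. Since each event in $B_k$ nudges $n_k$ by only $\pm 1$, and since random weight assignment controls the drift that events in other buckets can induce in $\hat W$, a concentration/martingale argument (for instance Azuma--Hoeffding applied to $R_k(t) - \tilde R_k$, whose per-event increments are $O(1/n_k)$) should give the required $\Omega(n_k)$ lower bound on the time between consecutive rebalancings.

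\textbf{Main obstacle.} The crux is this last step: jointly controlling the random walk of $n_k(t)$ and the correlated drift in $\hat W(t)$ coming from events in other buckets. A clean way to handle this would be via a potential function such as $\Phi_k = n_k \log^2(R_k/\tilde R_k)$, showing that its expected per-event drift is $O(1/n_k)$ while a rebalancing is triggered only when $\Phi_k$ exceeds a constant; this would force $\Omega(n_k)$ events to elapse in expectation between consecutive rebalancings of the same bucket. Combining the resulting $O(1)$ amortised cross-bucket bound with the $O(1)$ within-bucket bound yields the $O(1)$ expected disruptions per arrival and per departure claimed in the theorem.
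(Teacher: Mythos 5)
Your proposal has a genuine gap at exactly the point you flag as the ``main obstacle,'' and the surrounding accounting is also inconsistent. The claim that the per-event increments of $R_k(t)-\tilde R_k$ are $O(1/n_k)$ is false: a single arrival of a job whose weight is comparable to (or larger than) the current total changes $\hat W(t)$, and hence \emph{every} ratio $R_k(t)=n_k 2^k/\hat W(t)$, by a constant factor or more in one step, so no Azuma-type argument with $O(1/n_k)$ increments applies. Relatedly, your charging scheme charges each rebalancing of $B_k$ (cost $n_k$) to ``events touching $B_k$,'' but the events that drive $R_k$ away from $\tilde R_k$ are precisely changes of $\hat W$ caused by arrivals and departures in \emph{other} buckets, which need not touch $B_k$ at all; and even if the gap between rebalancings were $\Omega(n_k)$ arbitrary events, summing $n_k\cdot(\text{total events})/n_k$ over all weight classes gives a factor of the number of buckets, not $O(\text{total events})$. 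The random-permutation assumption, which is the only thing standing between this theorem and the polynomial lower bound of Theorem~\ref{thm:departure}, is never converted into a concrete probabilistic bound anywhere in the argument. Two further unaddressed points: with lazily updated budgets $\tilde R_k$, the declared budgets can sum to more than $1$ (feasibility is not argued), and the specific constants ($4$-approximate, $5$ expected disruptions) are only hoped for.

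For contrast, the paper's proof avoids buckets entirely: it picks one random offset $T\in[1/2,1]$, defines global thresholds $2^kT$ on the \emph{total} weight, gives every arriving job half its fair share, and resets every allocation to half its fair share whenever the total weight crosses a threshold. Feasibility and $4$-approximation follow because a reset must occur before the total weight doubles or halves. The disruption bound is where the random assignment is used: conditioned on the multiset of current weights, the newly arrived job's weight is uniform over the $k$ weights, so with probability $1/k$ it is the largest (a reset costs at most $k$, contributing $1$ in expectation), and if its weight is $w_j<w_k$ the probability that the arrival crosses one of the at most two thresholds in $[W/2,W]$ is at most $4w_j/W$, giving expected cost at most $1+\sum_j 4w_j/W\le 5$. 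If you want to salvage your bucket-based route, you would need an analogous argument showing that, over the random assignment, the probability that a given event moves $R_k$ past your rebalancing trigger is proportional to the arriving/departing weight relative to the total; at that point the global-threshold scheme is both simpler and already does the job.
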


Let us for convenience assume that the smallest weight is 1. 
Consider the following algorithm $A$:

\medskip
\noindent
{\bf Description of Algorithm A:} Initially let $T$ be a  random number 
in the range $[1/2, 1]$. Define a threshold to be any weight of the form $2^k T$ for
some integer $k$. 
When a  job $j$ arrives, its allocation is set to half of its weighted fair share. If this arrival causes the total weight in the
system to cross a threshold (that is to increase from below a threshold to above a threshold)
then the allocation of every job is reset to half of its weighted fair share.
Similarly, if the departure of a job $j$ causes the total weight in the
system to cross a threshold (that is to decrease from 
above a threshold to below a threshold)
then the allocation of every job is reset to half of its weighted fair.

Because the initial allocation
for a job is half of its fair share, and a reset must happen by the time that the total weight doubles, the resource will not be over allocated. 
Because the initial allocation
for a job is half of its fair share, and a reset happens must happen
by the time that total weight halves,
the algorithm $A$ maintains 4-approximate fairness.

Thus we are left to bound the expected number of disruptions per arrival and departure. We will only give the analysis for arrivals, as the analysis for 
departures more or less follows by symmetry. 
Assume that there are $k-1$ jobs in the system when a new job arrives.
For convenience let us renumber the earlier arriving jobs to  $1$ to $k-1$
and the new job to $k$. Let us condition on the weights of these $k$
jobs being $w_1 \le w_2  \ldots \le  w_k$. 
Let $W = \sum_{j=1}^k w_j$. 
Let $w$ be the random variable denoting
the weight of job $k$. (Note that due to our random assignment assumption,
$w$ is not necessarily $w_k$.) Let $D$ be the total number of disruptions 
caused by $k$'s arrival, and let $E_k$ be the event that $k$'s arrival caused
a reset.
Then 
\begin{align*}
    E[D]  = & \;E[D \mid w=w_k] \cdot P[w=w_k] + \\
    &\sum_{j=1}^{k-1} E[D \mid w_k=k \mbox{ and } E_k] \cdot P[E_k \mid w = w_j] \cdot P[w=w_j]  \\
    \le & \; k \cdot (1/k) + \sum_{j=1}^{k-1}  k \cdot  (4 w_j/W) \cdot (1/k) \\
    =& \; 1 + \sum_{j=1}^{k-1}   (4 w_j/W) \\
    \le & \; 5
\end{align*}
To elaborate, for all $j \in [k]$, $P[w=w_j] = 1/k$.
If $w=w_k$ then it is possible that $k$'s arrival caused
the total weight to cross a threshold,
and and thus $A$ would disrupt all $k$ of the
jobs in the system. Still $E[D \mid w_k=k] \le k$.
If $w= w_j$, $j < k$, then we know that
$k$'s arrival does not more than double the weight. So before
job $k$'s arrival, the total weight was at least $W/2$. 
As there can be at most two threshold between  $W/2$ and $W$,
$P[E_k \mid w = w_j] \le (4 w_j/W)$. 
And again obviously $E[D \mid w_k=k \mbox{ and } E_k] \le k$.

Note that Theorem \ref{thm:randomdeparture} would still hold if job
weights were drawn i.i.d. from some distribution.

\section{Monotone Fairness: Proof of Theorem~\ref{thm:monotone}}
\label{sect:monotone}

This section is devoted to proving Theorem~\ref{thm:monotone}, which we 
reproduced here for convenience.

\begin{startheorem}
Consider general monotone share policies in the arrival-only model.
There is a $(1+\epsilon)$-approximate deterministic algorithm $A$ such that the number of disruptions per job is $O(\log  n)$. This bound is tight, that is, for every deterministic $O(1)$-approximate algorithm $A$, there are instances that cause $A$ to make $\Omega(n \log n)$ disruptions.
\end{startheorem}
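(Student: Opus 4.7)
For the upper bound, my plan is to use a level-based discretization of allocations. Fix $\eps > 0$. For each alive job $j$ at time $t$, set
\[
A(j,t) \;:=\; (1+\eps)^{-\ell_j(t)}, \quad\text{where}\quad \ell_j(t) := \min\{\ell \in \mathbb{Z}_{\ge 0} : (1+\eps)^{-\ell} \le I(j,t)\}.
\]
Two properties are immediate: $A(j,t)\le I(j,t)$ yields feasibility $\sum_j A(j,t)\le\sum_j I(j,t)\le 1$, and $A(j,t)>I(j,t)/(1+\eps)$ yields $(1+\eps)$-approximation. Because $I$ is monotone and we are in the arrival-only model, $I(j,t)$ is non-increasing in $t$ for each fixed $j$, so $\ell_j(t)$ is non-decreasing and every disruption of $j$ corresponds to a single increment of $\ell_j$ past a power-of-$(1+\eps)$ threshold. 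Restricting attention to shares $I(j,t)\in[n^{-O(1)},1]$ (handling smaller shares with a uniform floor of, say, $1/n^2$ per job, which costs only $O(1/n)$ in total allocation and is absorbed by a mild rescaling of $\eps$) caps $\ell_j(t)$ at $O((\log n)/\eps)$, so each job is disrupted $O(\log n)$ times.

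For the lower bound, my plan is to exhibit a monotone fairness policy that forces any $c$-approximate algorithm to make $\Omega(n\log n/\log c)$ disruptions. I would partition the $n$ arrivals into $k = \lfloor\log_2 n\rfloor$ consecutive epochs of $n/k$ jobs each, and declare the ideal share of a job born in epoch $q$ at a time $t$ in epoch $e(t)\ge q$ to be
\[
I(j,t) \;=\; \frac{k}{n\cdot 2^{\,e(t)-q}}.
\]
It is immediate that this is monotone (shares halve across epochs) and that $\sum_j I(j,t)<1$ at every $t$, so this is a valid monotone fairness policy. Now consider any $c$-approximate algorithm and track the per-job slack
\[
\Phi(j,t) \;:=\; \log_2\!\bigl(c\cdot A(j,t)/I(j,t)\bigr) \;\ge\; 0.
\]
Between epochs, absent any disruption, $\Phi(j,\cdot)$ grows by exactly $1$ (since $I$ halves while $A$ is unchanged), so the total $\Phi$-growth over $j$'s lifetime is $\Theta(\log n)$ on average. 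The feasibility constraint $\sum_j A(j,t)\le 1$ rewrites as $\sum_j 2^{\Phi(j,t)}\,I(j,t)\le c$, which in our instance forces the average value of $\Phi(j,t)$ across alive jobs to be $O(\log c)$ at every epoch. Consequently, the gap between consecutive disruptions of a typical job can span at most $O(\log c)$ epochs, so each job's total number of disruptions is at least a $\Omega(1/\log c)$ fraction of its lifetime in epochs. Summing over the $n$ jobs and their $\Theta(\log n)$-epoch average lifetimes yields the claimed $\Omega(n\log n/\log c)$ lower bound, which is $\Omega(n\log n)$ for constant $c$.

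The main obstacle, and the step I expect to be most delicate, is making the last sentence rigorous: one has to rule out the tempting strategy of absorbing many epochs of slack growth in a single huge disruption. The key observation is that $\Phi(j,\cdot)$ begins growing again by $1$ per epoch immediately after any disruption, so a disruption of magnitude $R$ only buys $R$ disruption-free epochs before the feasibility cap bites again. Turning this intuition into a formal amortized count --- for instance by charging each unit of slack growth that a job experiences in a given epoch to a disruption of that job occurring within the next $O(\log c)$ epochs, and verifying that no disruption is overcharged by more than $O(\log c)$ units --- is the main technical work of the lower bound.
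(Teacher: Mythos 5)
Your upper bound is correct and is essentially the paper's argument in different clothing: the paper also maintains a lazily-updated allocation that tracks the current share to within a $(1+\eps)$ factor (it resets a heavy job to its new share divided by $1+\eps$ only when the share has dropped to within a $(1+\eps/2)$ factor of the held allocation, and freezes any job whose share falls below $\eps/(2n)$), which is the same mechanism as your rounding to powers of $(1+\eps)$ with a $\mathrm{poly}(1/n)$ floor; both need the implicit assumption $\sum_j I(j,t)\le 1$ and a small rescaling to absorb the floor, and both give $O(\log n)$ disruptions per job because a job's level can only increase and is capped at $O(\eps^{-1}\log n)$ before the floor takes over. The minor slip that your epoch instance has $\sum_j I(j,t)\le 2$ rather than $<1$ is harmless.

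The lower bound, however, has a genuine gap: your fixed epoch instance simply does not force $\Omega(n\log n)$ disruptions. Consider the schedule that resets a job's allocation only at ages $1,2,4,8,\ldots$ epochs after its birth, each time setting the allocation to one quarter of the job's then-current share $\frac{k}{n}2^{-a}$. Since the age at the last reset is always at least half the current age $a$, the held allocation is at least $\frac14\cdot\frac{k}{n}2^{-a}$ (so the schedule is $4$-approximate), while the total allocation is at most $\frac14\sum_{a\ge 0}2^{-a/2}<1$ at every time; each job is reset only $O(\log\log n)$ times, for $O(n\log\log n)$ disruptions in total (and this schedule extends to a deterministic $O(1)$-approximate algorithm on all instances by reverting to the generic algorithm the moment the input deviates from your instance). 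The step of your argument that fails is exactly the one you flagged, but it fails more badly than you anticipate: feasibility only bounds the \emph{share-weighted} sum $\sum_j 2^{\Phi(j,t)}I(j,t)$, and because $I(j,t)$ decays geometrically with a job's age, a job of age $a$ can carry slack $\Phi\approx a/2$ while contributing a geometrically small, summable amount to the budget; so the unweighted average of $\Phi$ is not $O(\log c)$, and no amortized charging can rescue an oblivious instance of this shape. The paper sidesteps this with an \emph{adaptive} adversary: shares are lowered only for the job that just arrived and for the jobs the algorithm just disrupted, each being set a factor $c$ below the algorithm's current allocation, so staleness can never accumulate "for free" on untouched jobs. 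An averaging argument (total share stays below $1$) then shows that a constant fraction of jobs end with share $O(c/n)$, and since a $c$-approximate algorithm can shrink an allocation by at most a factor $c^{2}$ per disruption, each such job must be disrupted $\Omega(\log_c n)$ times, giving $\Omega(n\log n)$ in total. Because the theorem quantifies over deterministic algorithms, this adaptive construction is legitimate (simulating $A$ turns it into a fixed instance tailored to $A$), which is weaker than --- and should replace --- your attempt at a single universal instance.
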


We will first design and analyze an algorithm $A$ to prove the upper bound
portion of the theorem.
As jobs arrive over time, we will maintain the invariant that for every job $j$, if $A$ has allocated $1/c^i$ of the resource to $j$, 
then the fair share for that job will be $1/c^{i+1}$. When a new job $j$ arrives, its fair share is set to $1/c$. 
Because of the invariant, and the feasibility of $A$, the aggregate un-apportioned fair share  before $j$ arrives is at least $1-1/c$. 
So the fair share is not exceeded when job $j$ arrives. By the fairness of $A$, it must allocate at least $1/c^2$ fraction of the
resource to job $j$. Note that for any job $i$ disrupted when job $j$ arrives, $A$'s allocation to $i$ can decrease by at most a factor of $c^2$,
or this would contradict the fairness of $A$. 

After the algorithm $A$ decides how much to allocate to each of the jobs after $j$'s arrival then the adversary updates the fair share allocations of the jobs. The fair share of job $j$, and any jobs that $A$ disrupted in response to $j$'s arrival, 
is set to $1/c$ of whatever $A$'s allocation is at this time. The process then proceeds with the arrival of job $j+1$.

Now note that after $n$ jobs arrive, at least $n/c$ jobs have fair share no more than $c/n$.  This follows by an averaging argument and the fact that at most a unit of fair share is allocated in aggregate to the jobs. Since these $n/c$ jobs started with allocation at least
$1/c^2$, and decreased by a factor of at most $c^2$ per disruption, the number of disruptions $d$ for each of these
jobs satisfies $(\frac{1}{c^2}) (\frac{1}{c^{2d}}) \le \frac{c}{n}$, or equivalently $d \geq (\log_c n  - 3)/2$.  The theorem follows.

We now turn to proving a lower bound for an arbitrary algorithm $A$.
Assume for the moment that $n$ is known a priori. 
A job is called light if its fair share is at most 
$\frac{\epsilon}{2n}$, and is called heavy otherwise. 
When a new job arrives, $A$ allocates that job its fair share divided by $1 + \epsilon$. If a job $j$'s fair share is lowered, there are three
possible responses. If $j$ was already light before its fair share was
lowered, then $A$ doesn't change its allocation. 
Otherwise, if the new fair share is still greater than a $1+ \epsilon/2$ factor of $A$'s allocation, then again $A$ doesn't change its allocation.
Finally, 
if the fair share of heavy job is ever lowered to be less than a $1+ \epsilon/2$ factor of $A$'s current allocation,
then $A$ resets its allocation for this job to its be its new fair share divided by $1 + \epsilon$. 

The total allocation of $A$ on light jobs is at most $\frac{\epsilon}{2}$.
At any particular time, let $F$ be the total fair share of the heavy jobs, and $U$ the unallocated fair share. 
Thus the total unallocated portion of the resource for $A$ is always at least $\frac{\epsilon F}{2} - \frac{\epsilon}{2} + U \ge 0$. 
Thus $A$ never overuses the resource. Also its obvious that $A$ has $O(\log n)$ disruptions per job. If $n$ is not known a priori, then the standard guess-and-double technique can be used; So when $A$'s estimation of $n$ doubles, it doubles its estimation, and resets all allocations to be what they would have been with the new estimate.

\section{Conclusions}
We have presented simple policies for minimizing the number of reallocations needed to maintain fair resource shares under both arrival-only and arrival-departure models. We have shown that in the worst case, no better policies exist. We have also presented a stochastic model where the assignment of weights is independent of arrival and departure times, and have shown improved results in this model. 

We conclude with some open questions. First, though we have a result for general monotone allocations with $D$ resources, it would be interesting to specialize it to specific classes of fair share policies. More specifically, we conjecture that the number of disruptions needed for  DRF  is significantly better than our bound of $O(n D \log n)$ in the arrival only model. Secondly, our model assumes the departure time of a job is independent of the rate allocated to it. In reality, the job has a fixed amount of processing, so the departure time will depend on the rate allocated to the job. Can we modify our model to handle this aspect? Finally, it would be interesting to study the computational complexity of the {\em offline} setting, where all arrivals and departure times are known in advance, and the goal is to compute the instance-optimal tradeoff between disruptions and fairness.

\paragraph{Acknowledgment.} Im is supported in part by NSF grants CCF-1409130, CCF-1617653, and CCF-1844939. Moseley is supported in part by a Google Research Award and NSF grants CCF-1617724, CCF-1733873 and CCF-1725543. Munagala is supported in part by NSF grants CCF-1408784, CCF-1637397, and IIS-1447554, ONR award N00014-19-1-2268, and awards from Adobe and Facebook. Pruhs is supported in part by NSF grants CCF-1421508 and CCF-1535755, and an IBM Faculty Award.

\bibliographystyle{plain}
\bibliography{refs,mmf}

\begin{thebibliography}{10}

\bibitem{EC18Psomas}
Gerdus Benade, Aleksandr~M. Kazachkov, Ariel~D. Procaccia, and
  Christos{-}Alexandros Psomas.
\newblock How to make envy vanish over time.
\newblock In {\em Proceedings of the 2018 {ACM} Conference on Economics and
  Computation, Ithaca, NY, USA, June 18-22, 2018}, pages 593--610, 2018.

\bibitem{hierarchalScheduling}
Arka~A. Bhattacharya, David Culler, Eric Friedman, Ali Ghodsi, Scott Shenker,
  and Ion Stoica.
\newblock Hierarchical scheduling for diverse datacenter workloads.
\newblock In {\em Proceedings of the 4th Annual Symposium on Cloud Computing},
  SOCC '13, pages 4:1--4:15, New York, NY, USA, 2013. ACM.

\bibitem{minimumUtility}
Bryan~L. Deuermeyer, Donald~K. Friesen, and Michael~A. Langston.
\newblock Scheduling to maximize the minimum processor finish time in a
  multiprocessor system.
\newblock {\em SIAM Journal on Algebraic Discrete Methods}, 3(2):190--196,
  1982.

\bibitem{noComplaint}
Danny Dolev, Dror~G. Feitelson, Joseph~Y. Halpern, Raz Kupferman, and Nathan
  Linial.
\newblock No justified complaints: On fair sharing of multiple resources.
\newblock In {\em Proceedings of the 3rd Innovations in Theoretical Computer
  Science Conference}, ITCS '12, pages 68--75, New York, NY, USA, 2012. ACM.

\bibitem{EdmondsP09}
Jeff Edmonds and Kirk Pruhs.
\newblock Scalably scheduling processes with arbitrary speedup curves.
\newblock In {\em ACM-SIAM Symposium on Discrete Algorithms}, pages 685--692,
  2009.

\bibitem{dynamicSocialChoice}
Rupert Freeman, Seyed~Majid Zahedi, and Vincent Conitzer.
\newblock Fair and efficient social choice in dynamic settings.
\newblock In {\em Proceedings of the Twenty-Sixth International Joint
  Conference on Artificial Intelligence, {IJCAI-17}}, pages 4580--4587, 2017.

\bibitem{discreteJobs}
Eric Friedman, Ali Ghodsi, and Christos-Alexandros Psomas.
\newblock Strategyproof allocation of discrete jobs on multiple machines.
\newblock In {\em Proceedings of the Fifteenth ACM Conference on Economics and
  Computation}, EC '14, pages 529--546, New York, NY, USA, 2014. ACM.

\bibitem{dynamicFairDivision}
Eric Friedman, Christos-Alexandros Psomas, and Shai Vardi.
\newblock Dynamic fair division with minimal disruptions.
\newblock In {\em Proceedings of the Sixteenth ACM Conference on Economics and
  Computation}, EC '15, pages 697--713, New York, NY, USA, 2015. ACM.

\bibitem{dynamicFairDivision2}
Eric Friedman, Christos-Alexandros Psomas, and Shai Vardi.
\newblock Controlled dynamic fair division.
\newblock In {\em Proceedings of the 2017 ACM Conference on Economics and
  Computation}, EC '17, pages 461--478, New York, NY, USA, 2017. ACM.

\bibitem{DRF}
Ali Ghodsi, Matei Zaharia, Benjamin Hindman, Andy Konwinski, Scott Shenker, and
  Ion Stoica.
\newblock Dominant resource fairness: Fair allocation of multiple resource
  types.
\newblock In {\em Proceedings of the 8th USENIX Conference on Networked Systems
  Design and Implementation}, NSDI'11, pages 323--336, Berkeley, CA, USA, 2011.
  USENIX Association.

\bibitem{Graham}
R.~L. Graham.
\newblock Bounds on multiprocessing timing anomalies.
\newblock {\em SIAM Journal on Applied Mathematics}, 17(2):416--429, 1969.

\bibitem{HyllandZ}
Aanund Hylland and Richard Zeckhauser.
\newblock The efficient allocation of individuals to positions.
\newblock {\em Journal of Political Economy}, 87(2):293--314, 1979.

\bibitem{Im2014}
Sungjin Im, Janardhan Kulkarni, and Kamesh Munagala.
\newblock Competitive algorithms from competitive equilibria: Non-clairvoyant
  scheduling under polyhedral constraints.
\newblock In {\em Proceedings of the Forty-sixth Annual ACM Symposium on Theory
  of Computing}, STOC '14, pages 313--322, New York, NY, USA, 2014. ACM.

\bibitem{quincy}
Michael Isard, Vijayan Prabhakaran, Jon Currey, Udi Wieder, Kunal Talwar, and
  Andrew Goldberg.
\newblock Quincy: Fair scheduling for distributed computing clusters.
\newblock In {\em Proceedings of the ACM SIGOPS 22Nd Symposium on Operating
  Systems Principles}, SOSP '09, pages 261--276, New York, NY, USA, 2009. ACM.

\bibitem{noAgentLeftBehind}
Ian Kash, Ariel~D. Procaccia, and Nisarg Shah.
\newblock No agent left behind: Dynamic fair division of multiple resources.
\newblock In {\em Proceedings of the 2013 International Conference on
  Autonomous Agents and Multi-agent Systems}, AAMAS '13, pages 351--358,
  Richland, SC, 2013. International Foundation for Autonomous Agents and
  Multiagent Systems.

\bibitem{Li:2018}
Bo~Li, Wenyang Li, and Yingkai Li.
\newblock Dynamic fair division problem with general valuations.
\newblock In {\em Proceedings of the 27th International Joint Conference on
  Artificial Intelligence}, IJCAI'18, pages 375--381. AAAI Press, 2018.

\bibitem{Li:2018ArXiv}
Bo~Li and Yingkai Li.
\newblock Dynamic fair division problem with general valuations.
\newblock {\em CoRR}, abs/1802.05294, 2018.

\bibitem{migration}
Dejan~S. Milo\'{\j}i\v{c}i\'{c}, Fred Douglis, Yves Paindaveine, Richard
  Wheeler, and Songnian Zhou.
\newblock Process migration.
\newblock {\em ACM Comput. Surv.}, 32(3):241--299, September 2000.

\bibitem{beyondDRF}
David~C. Parkes, Ariel~D. Procaccia, and Nisarg Shah.
\newblock Beyond dominant resource fairness: Extensions, limitations, and
  indivisibilities.
\newblock {\em ACM Trans. Econ. Comput.}, 3(1):3:1--3:22, March 2015.

\bibitem{schwarzkopf2013omega}
Malte Schwarzkopf, Andy Konwinski, Michael Abd-El-Malek, and John Wilkes.
\newblock Omega: flexible, scalable schedulers for large compute clusters.
\newblock 2013.

\bibitem{Varian}
Hal~R Varian.
\newblock Equity, envy, and efficiency.
\newblock {\em Journal of Economic Theory}, 9(1):63 -- 91, 1974.

\bibitem{borg}
Abhishek Verma, Luis Pedrosa, Madhukar Korupolu, David Oppenheimer, Eric Tune,
  and John Wilkes.
\newblock Large-scale cluster management at google with borg.
\newblock In {\em Proceedings of the Tenth European Conference on Computer
  Systems}, EuroSys '15, pages 18:1--18:17, New York, NY, USA, 2015. ACM.

\bibitem{onlineCakeCutting}
T.~{Walsh}.
\newblock {Online Cake Cutting (published version)}.
\newblock {\em ArXiv e-prints}, June 2011.

\bibitem{heterogeneousDRF}
W.~Wang, B.~Li, and B.~Liang.
\newblock Dominant resource fairness in cloud computing systems with
  heterogeneous servers.
\newblock In {\em IEEE INFOCOM 2014 - IEEE Conference on Computer
  Communications}, pages 583--591, April 2014.

\bibitem{sharingIncentives}
S.~M. Zahedi and B.~C. Lee.
\newblock Sharing incentives and fair division for multiprocessors.
\newblock {\em IEEE Micro}, 35(3):92--100, May 2015.

\end{thebibliography}


\end{document}